\title{Correlation Clustering with Same-Cluster Queries Bounded by Optimal Cost} %TODO Please add
\date{}
\author{Barna Saha\\
University of California, Berkeley\\
\texttt{barnas@berkeley.edu\thanks{B. Saha is partially supported by an NSF CAREER Award CCF 1652303, a Google Faculty Award and an Alfred P. Sloan fellowship.}} \and
Sanjay Subramanian \\
Allen Institute for Artificial Intelligence\\
\texttt{sanjays@allenai.org}\thanks{Most of this work was completed when the second author was at the University of Pennsylvania and the University of Massachusetts at Amherst. This work was supported in part by the National Science Foundation (NSF) Research Experiences for Undergraduates (REU) program.}}
\newtheorem{thm}{Theorem}[section]
\newtheorem{lem}{Lemma}[section]
\newtheorem{hyp}{Hypothesis}[section]
\newtheorem{cor}{Corollary}[section]
\begin{document}
\setcounter{page}{0}
\maketitle

\begin{abstract}
Several clustering frameworks with interactive (semi-supervised) queries have been studied in the past. Recently, clustering with same-cluster queries has become popular. An algorithm in this setting has access to an oracle with full knowledge of an optimal clustering, and the algorithm can ask the oracle queries of the form, ``Does the optimal clustering put vertices $ u $ and $ v $ in the same cluster?'' Due to its simplicity, this querying model can easily be implemented in real crowd-sourcing platforms and has attracted a lot of recent work. 

In this paper, we study the popular correlation clustering problem (Bansal et al., 2002) under the same-cluster querying framework. Given a complete graph $G=(V,E)$ with positive and negative edge labels, correlation clustering objective aims to compute a graph clustering that minimizes the total number of disagreements, that is the negative intra-cluster edges and positive inter-cluster edges. In a recent work, Ailon et al. (2018b) provided an approximation algorithm for correlation clustering that approximates the correlation clustering objective within $(1+\epsilon)$ with $O(\frac{k^{14}\log{n}\log{k}}{\epsilon^6})$ queries when the number of clusters, $k$, is fixed. For many applications, $k$ is not fixed and can grow with $|V|$. Moreover, the dependency of $k^14$ on query complexity renders the algorithm impractical even for datasets with small values of $k$. 

In this paper, we take a different approach. Let $ C_{OPT} $ be the number of disagreements made by the optimal clustering. We present algorithms for correlation clustering whose error and query bounds are parameterized by $C_{OPT}$  rather than by the number of clusters. Indeed, a good clustering must have small $C_{OPT}$. Specifically, we present an efficient algorithm that recovers an exact optimal clustering using at most $2C_{OPT} $ queries and an efficient algorithm that outputs a $2$-approximation using at most $C_{OPT} $ queries. In addition, we show under a plausible complexity assumption, there does not exist any polynomial time algorithm that has an approximation ratio better than $1+\alpha$ for an absolute constant $\alpha >0$ with $o(C_{OPT})$ queries. Therefore, our first algorithm achieves the optimal query bound within a factor of $2$.

We extensively evaluate our methods on several synthetic and real-world datasets using real crowd-sourced oracles. Moreover, we compare our approach against known correlation clustering algorithms that do not perform querying. In all cases, our algorithms exhibit superior performance.

\end{abstract}
\newpage
\setcounter{page}{1}
\section{Introduction}
\label{introduction}
In correlation clustering, the algorithm is given potentially inconsistent information about similarities and dissimilarities between pairs of vertices in a graph, and the task is to cluster the vertices so as to minimize disagreements with the given information \cite{Bansal-02, Chawla-15}. The correlation clustering problem was first proposed by Bansal, Blum and Chawla \cite{Bansal-02} and since then it has found numerous applications in document clustering, image segmentation, grouping gene expressions etc. \cite{Bansal-02, Chawla-15}.

In correlation clustering, we are given a complete graph $G=(V,E)$, $|V|=n$, where each edge is labelled either $+$ or $-$. An optimal clustering partitions the vertices such that the number of intra-cluster negative edges and inter-cluster positive edges is minimized. The problem is known to be NP-Hard. The seminal work of Bansal et al. \cite{Bansal-02} gave a constant factor approximation for correlation clustering. Following a long series of works \cite{Bansal-02,Charikar-05,Ailon-08,Giotis-06,Demaine-06}, the best known approximation bounds till date are a 3-approximation combinatorial algorithm \cite{Ailon-05} and a $2.06$-approximation based on linear programming rounding \cite{Chawla-15}. The proposed linear programming relaxation for correlation clustering  \cite{Charikar-05,Ailon-05,Chawla-15} is known to have an integrality gap of $2$, but there does not exist yet a matching algorithm that has an approximation ratio $2$ or lower. 

Correlation clustering problem can be extended to weighted graphs for an $O(\log{n})$-approximation bound and is known to be optimal \cite{Demaine-06}. Moreover, when one is interested in maximizing agreements, a polynomial time approximation scheme was provided by Bansal et al. \cite{Bansal-02}.

Over the last two decades, crowdsourcing has become a widely used way to generate labeled data for supervised learning. The same platforms that are used for this purpose can also be used for unsupervised problems, thus converting the problems to a semi-supervised active learning setting. This can often lead to significant improvements in accuracy.  However, using crowdsourcing introduces another dimension to the optimization problems, namely minimizing the amount of crowdsourcing that is used. The setting of active querying has been studied previously in the context of various clustering problems. Balcan and Blum \cite{Balcan-08} study a clustering problem in which the only information given to the algorithm is provided through an oracle that tells the algorithm either to ``merge'' two clusters or to ``split'' a cluster. More recently, Ashtiani, Kushgra and Ben-David \cite{Ashtiani-16} considered a framework of same-cluster queries for clustering; in this framework, the algorithm can access an oracle that has full knowledge of an optimal clustering and can issue queries to the oracle of the form ``Does the optimal clustering put vertices $u$ and $v$ in the same cluster?'' Because of its simplicity, such queries are highly suited for crowdsourcing and has been studied extensively both in theory community \cite{Ailon-18a,Mazumdar-17,Ailon-18b,GHS18} and in applied domains \cite{verroios2017waldo,firmani2018robust,gruenheid2015fault,verroios2015entity}. Correlation clustering has also been considered in this context. Ailon, Bhattacharya and Jaiswal \cite{Ailon-18b} study correlation clustering in this framework under the assumption that the number $ k $ of clusters is fixed. They gave an $(1+\epsilon)$ approximation algorithm for correlation clustering that runs in polynomial time and issues $O(k^{14}\log{n}\log{k}/\epsilon^6)$ queries. However, for most relevant applications, the number of clusters $k$ is not fixed. Even for fixed $k$, the dependence of $k^{14}$ is huge (consider $k=2$ and $2^{14}=16384$ with additional constants terms hidden under $O()$ notation). 

In this paper, we give near-optimal algorithms for correlation clustering with same-cluster queries that are highly suitable for practical implementation and whose performance is parameterized by the optimum number of disagreements. Along with providing theoretical guarantees, we perform extensive experiments on multiple synthetic and real datasets. Let $ C_{OPT} $ be the number of disagreements made by the optimal clustering. Our contributions are as follows.
\begin{enumerate}
	\item A deterministic algorithm that outputs an {\em optimal} clustering using at most $2C_{OPT} $ queries (Section~\ref{sec:opt}).
	\item An expected $2$-approximation algorithm that uses at most $C_{OPT}$ queries in expectation (Section~\ref{sec:approx}). 
	\item A new lower bound that shows it is not possible to get an $(1+\alpha)$ approximation for some constant $\alpha >0$ with any polynomial time algorithm that issues $o(C_{OPT})$ queries assuming GAP-ETH (see definition in Section~\ref{sec:lower}).
	\item An extensive experimental comparison that not only compares the effectiveness of our algorithms, but also compares the state-of-the art correlation clustering algorithms that do not require any querying (Section~\ref{sec:experiment}).
\end{enumerate}
Assumption of an optimum oracle \cite{Ashtiani-16, Ailon-18b} is quite strong in practice. However, our experiments reveal that such an assumption is not required. In correlation clustering, often the $\pm$ edges are generated by fitting an automated classifier, where each vertex corresponds to some object and is associated with a feature vector. In our experiments with real-world data, instead of an optimum oracle, we use crowdsourcing. By making only a few pair-wise queries to a crowd oracle, we show it is possible to obtain an optimum or close to optimum clustering. After our work, it came to our notice that it may be possible to use Bocker et al.'s~\cite{Bocker2009} results on fixed-parameter tracktability of cluster editing to adapt to our setting, and get better constants on the query complexity. In this long version of the paper, we include experimental results for the branching algorithm of \cite{Bocker2009} with original running time $ O(1.82^k+n^3) $, adapted to our setting. Our algorithms and techniques are vastly different from \cite{Bocker2009} and are also considerably simpler.

\section{Related Work}
Asthiani et al. \cite{Ashtiani-16} considered the $ k $-means objective with same-cluster queries and showed that it is possible to recover the optimal clustering under $k$-means objective with high probability by issuing $ O(k^2\log k + k\log n) $ queries if a certain margin condition holds for each cluster. Gamlath, Huang and Svensson extended the above result when approximation is allowed \cite{GHS18}. 
Ailon et al. \cite{Ailon-18b} studied correlation clustering with same-cluster queries and showed that there exists an $(1+\epsilon)$ approximation for correlation clustering where the number of queries is a (large) polynomial in $k$. Our algorithms are different from those in \cite{Ailon-18b} in that our guarantees are parameterized by $ C_{OPT} $ rather than by $ k $. Kushagra et al.~\cite{Kushagra-18} study a restricted version of correlation clustering where the valid clusterings are provided by a set of hierarchical trees and provide an algorithm using same-cluster queries for a related setting, giving guarantees in terms of the size of the input instance (or the VC dimension of the input instance) rather than $C_{OPT} $. \cite{Mazumdar-17} studied, among other clustering problems, a random instance of correlation clustering under same-cluster queries.

Our algorithms are based on the basic 3-approximation algorithm of Ailon et al.~\cite{Ailon-05} that selects a pivot vertex randomly and forms a cluster from that vertex and all of its $ + $-neighbors. They further honed this approach by choosing to keep each vertex in the pivot's cluster with a probability that is a function of the linear programming solution. Chawla et al.~\cite{Chawla-15} used a more sophisticated function of the linear programming solution to design the current state-of-the-art algorithm, which gives a 2.06 approximation for correlation clustering.

\section{Finding an Optimal Clustering}
\label{sec:opt}

We are given a query access to an oracle that given any two vertices $u$ and $v$ returns whether or not $u$ and $v$ are together in a cluster in an optimal solution. Let $OPT$ denote the optimal solution which is used by the oracle. Given a positive ($+$) edge $(u,v)$, if $OPT$ puts $u$ and $v$ in different clusters, then we say $OPT$ makes a mistake on that edge. Similarly for a negative ($-$) edge $(u,v)$, if $OPT$ puts them together in a cluster then again $OPT$ makes a mistake on it. Similarly, our algorithm can decide to make mistakes on certain edges and our goal is to minimize the overall number of mistakes. It is easy to see that an optimal solution for a given input graph makes mistakes only on edges that are part of a $(+, +, -)$ triangle. Moreover, any optimal solution must make at least one mistake in such a triangle.

The pseudocode for our algorithm, \textsc{QueryPivot}, is given in Algorithm~\ref{alg:exact}. The algorithm is as follows (in the following description, we give in brackets the corresponding line number for each step). We pick a pivot $u$ arbitrarily from the set of vertices that are not clustered yet [line 5]. For each $(+, +, -)$ triangle $(u, v, w)$ [line 10], if we have not yet determined via queries that $OPT$ makes a mistake on $\{u, v\}$ or that $OPT$ makes a mistake on $\{u, w\}$ [lines 11-14], then (1) we query $\{u, v\}$ [line 17] and if $ OPT $ makes a mistake on this edge, we too decide to make a mistake on this edge and proceed to the next $(+, +, -)$ triangle involving $u$ and (2) if $OPT$ does not make a mistake on $\{u, v\}$, then we query $\{u, w\}$ [line 23] and make a mistake on it if $OPT$ makes a mistake on it. Note that if we have already queries one of $\{u, v\}$ or $\{u, w\}$ and found a mistake, we do not query the other edge [line 11]. Once we have gone through all $(+, +, -)$ triangles involving $u$ then for every $v \neq u$, if we have not already decided to make a mistake on $\{u, v\}$, then if $ \{u, v\} $ is a $+$ edge we keep $v$ in $u$'s cluster and if $\{u, v\}$ is a $ - $ edge we do not put $ v $ in $ u $'s cluster. On the other hand, if we have decided to make a mistake on $ \{u, v\} $, then if $ \{u, v\} $ is a $ - $ edge we keep $ v $ in $ u $'s cluster and if $ \{u, v\} $ is a $ + $ edge we do not put $ v $ in $ u $'s cluster. Finally, we remove all vertices in $u$'s cluster from the set of remaining vertices and recursively call the function on the set of remaining vertices.

In the pseudocode, $Queried[v]=1$ means the algorithm has already issued a query $(pivot,v)$ to the oracle, $Mistake[v]=1$ means it has decided to make a mistake on the edge $(pivot,v)$ based on the oracle answer, and $Oracle(pivot, v)$ returns $ 1 $ iff $OPT$ makes a mistake on the edge $\{pivot, v\}$. We prove the following theorem that shows that \textsc{QueryPivot} is able to recover the optimal clustering known to the oracle with a number of queries bounded in terms of $ C_{OPT} $.
\begin{thm}
		\label{thm:querypivot}
		Let $ C_{OPT} $ be the number of mistakes made by an optimal clustering. The \textsc{QueryPivot} algorithm makes $ C_{OPT} $ mistakes and makes at most $ 2C_{OPT} $ queries to the oracle.
\end{thm}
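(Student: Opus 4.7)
The plan is to prove two things: (i) the algorithm's output coincides with OPT, so it incurs exactly $C_{OPT}$ mistakes, and (ii) each OPT mistake is charged at most two oracle queries, giving at most $2C_{OPT}$ queries.

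For correctness, I would argue by induction on the pivot sequence, reducing everything to the following claim: whenever $\{u,v\}$ is a mistake in OPT and $v$ is still remaining, the algorithm queries $\{u,v\}$ during $u$'s iteration. Since $Mistake[v]$ is set to $1$ only upon a confirmed oracle answer, this claim implies the algorithm correctly decides for every remaining $v$ whether $\{u,v\}$ is a mistake, and therefore correctly forms $u$'s OPT cluster; induction then gives that the output equals OPT. Suppose for contradiction that a mistake $\{u,v\}$ is never queried. For every $(+,+,-)$ triangle $(u,v,w)$ containing $\{u,v\}$, the algorithm either skipped the triangle (because $Mistake[w]$ had already been set) or queried $\{u,w\}$ first and aborted after finding a mistake there; in both cases $\{u,w\}$ must itself be a mistake in OPT. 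So for every $w$ that completes a $(+,+,-)$ triangle with $(u,v)$, the edge $\{u,w\}$ is a mistake. The hardest part of the proof is deriving a contradiction from this condition; I plan to do this with a local-optimality argument for OPT, splitting on the sign of $\{u,v\}$ and on the OPT clusters of $u,v,w$ (with special attention to the singleton sub-cases). In each sub-case the condition forces a surplus of bad $u$- and $v$-edges relative to the two relevant clusters, so that moving $u$ into $v$'s OPT cluster (or symmetrically $v$ into $u$'s) strictly decreases the total disagreements, contradicting the optimality of OPT.

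For the query bound, I would partition queries into type-(a), those confirming a mistake, and type-(b), those confirming no mistake. Each type-(a) query corresponds to a distinct mistake incident to a pivot, so their total count is some $|S_a|\le C_{OPT}$. Every type-(b) query arises in a specific $(+,+,-)$ triangle $(u,v,w)$ that is of one of two kinds. Kind~1: the next query in the triangle is type-(a), i.e., a mistake on $\{u,w\}$; I charge the type-(b) to that mistake, and since different kind-1 triangles carry different type-(a) edges there are at most $|S_a|$ type-(b) queries of this kind. Kind~2: the next query is also type-(b), so neither $u$-edge is a mistake and the unique mistake in the triangle lies on $\{v,w\}$. In a $(+,+,-)$ triangle at least one of the two $u$-edges must be $+$; since both $u$-edges are non-mistakes, this forces at least one of $v,w$ to lie in $u$'s OPT cluster, and the algorithm therefore places it into $u$'s algorithmic cluster and removes it from the remaining set. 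Consequently each mistake $\{v,w\}$ witnesses at most one kind-2 triangle across the whole algorithm, and it is never subsequently queried as type-(a) (the removed endpoint never becomes a pivot, and when the other endpoint might become a pivot the removed one is no longer remaining). Letting $S_b$ denote the set of such mistakes, $S_a$ and $S_b$ are disjoint, so $|S_a|+|S_b|\le C_{OPT}$ and the total queries are at most $|S_a|+|S_a|+2|S_b|=2(|S_a|+|S_b|)\le 2C_{OPT}$.
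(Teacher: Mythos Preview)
Your correctness argument follows the same route as the paper: reduce to the claim that every pivot-incident OPT mistake gets queried, then derive a contradiction from an unqueried mistake via a local-improvement argument on OPT. The paper packages the improvement step as two standalone lemmas (Lemma~\ref{lem:suboptimal1} for a $-$ edge inside a cluster, Lemma~\ref{lem:suboptimal2} for a $+$ edge across clusters), which is exactly your planned case split on the sign of $\{u,v\}$. One minor point: in the $-$ edge case $u$ and $v$ already share an OPT cluster, so the improving move is not ``move $u$ into $v$'s cluster'' but rather to pull one of them out to a singleton; the paper's Lemma~\ref{lem:suboptimal1} does precisely this.

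For the query bound your charging scheme is correct but more intricate than what the paper does. The paper's argument is a one-line per-triangle count: whenever the algorithm processes a $(+,+,-)$ triangle and issues at least one query, it issues at most two, and that triangle can be associated with a mistake of the output clustering (on one of its three edges) that has not been associated to any previously processed triangle. Hence the number of queries is at most twice the number of mistakes, which is $2C_{OPT}$ by correctness. Your type-(a)/type-(b) decomposition with the Kind~1/Kind~2 split recovers the same bound, and your disjointness argument for $S_a$ and $S_b$ is essentially the observation that in a Kind~2 triangle one endpoint of the opposite edge is absorbed into the pivot's cluster; but all of this is subsumed by the simpler per-triangle accounting.
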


\begin{algorithm}[tb]
	\caption{\textsc{QueryPivot}}
	\label{alg:exact}
	\begin{algorithmic}[1]
		\STATE {\bfseries Input:} vertex set $V$, adjacency matrix $A$, oracle $ \textit{Oracle} $
		\IF{$ V == \emptyset $}
		\STATE {\bfseries return} $ \emptyset $
		\ENDIF
		\STATE $ \textit{pivot}  \gets $ Arbitrary vertex in $ V $
		\STATE $ \textit{T} \gets $ all $  (+, +, -)  $ triangles that include \textit{pivot}
		\STATE $ \textit{C} \gets \textit{V} $
		\STATE $ \textit{Queried} \gets $ length-n array of zeros
		\STATE $ \textit{Mistakes} \gets $ length-n array of zeros
		\FOR{$ (\textit{pivot}, v, w) \in T $}
		\IF{$ \textit{Mistake}[v] == 1 $ {\bf or} $ \textit{Mistake}[w] == 1$}
		\STATE continue
		\ELSIF{$ \textit{Queried}[v] == 1 $ {\bf and} $ \textit{Queried}[w] == 1 $}
		\STATE continue
		\ELSIF{$ \textit{Queried}[v] == 0 $}
		\STATE $ \textit{Queried}[v] \gets 1 $
		\IF{$ \textit{Oracle}(\textit{pivot}, v) == 1 $}
		\STATE $ \textit{Mistake}[v] \gets 1 $
		\ENDIF
		\ENDIF
		\IF{$ \textit{Queried}[w] == 0 $ {\bf and} $ \textit{Mistake}[v] == 0 $}
		\STATE $ \textit{Queried}[w] \gets 1 $
		\IF{$ \textit{Oracle}(\textit{pivot}, w) == 1 $}
		\STATE $ \textit{Mistake}[w] \gets 1 $
		\ENDIF
		\ENDIF
		\ENDFOR
		\FOR{$ v \in V \setminus \{\textit{pivot}\} $}
		\IF{$ (v \in N^-(\textit{pivot}) $ {\bf and} $ \textit{Mistake}[v] == 0) $ {\bf or} $ (v \in N^+(\textit{pivot}) $ {\bf and} $ \textit{Mistake}[v] == 1) $}
		\STATE $ C = C \setminus \{v\} $
		\ENDIF
		\ENDFOR
		\STATE {\bfseries return} $ \{C\} \cup \textsc{QueryPivot}(V \setminus C, A, \textit{Oracle}) $
	\end{algorithmic}
\end{algorithm}

For a given cluster $ C $ and a vertex $ w \in C $, we denote by $ N_C^+(w) $ the set of vertices in $ C $ that have $ + $ edges with $ w $. Similarly, we denote by $ N_C^-(w) $ the set of vertices in $ C $ that have $ - $ edges.

The algorithm time complexity is dominated by the time taken to check $(+,+,-)$ triangles involved with the pivots. Let $E^+$ denote the set of positive edges in $G$. Then all the $(+,+,-)$ triangles that include a pivot can be checked in time $O(|E^+|*n)$. 

\begin{lem}
	The \textsc{QueryPivot} algorithm outputs a valid partition of the vertices.
\end{lem}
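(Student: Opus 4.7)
The plan is to proceed by strong induction on $|V|$, showing that in each recursive invocation the cluster $C$ that is produced is a nonempty subset of $V$, and that its complement in $V$ is the input to the recursive call. Combined with the inductive hypothesis, this yields a partition of $V$.

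For the base case, when $V = \emptyset$ the algorithm returns $\emptyset$, which is trivially a partition of $V$. For the inductive step, I would first argue that $C \subseteq V$: the algorithm initializes $C \gets V$ and only ever removes vertices from $C$ in the loop on lines 28--32, so $C$ stays inside $V$ throughout. Next, I would show that $\textit{pivot} \in C$: the removal loop ranges over $v \in V \setminus \{\textit{pivot}\}$, so the pivot can never be removed. In particular $C$ is nonempty. Finally, the recursive call is made on $V \setminus C$, which is strictly smaller than $V$ (since $\textit{pivot} \in C$), so by the inductive hypothesis it returns a partition $\mathcal{P}'$ of $V \setminus C$. The algorithm outputs $\{C\} \cup \mathcal{P}'$, which is a partition of $(V \setminus C) \cup C = V$ because $C$ is disjoint from every block of $\mathcal{P}'$ (all of which lie inside $V \setminus C$).

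The only point that requires any attention is the claim that the pivot is never removed from its own cluster. This follows directly from the fact that the removal loop is indexed by $v \in V \setminus \{\textit{pivot}\}$, but I would also note for completeness that the pivot is not otherwise altered by the triangle-processing loop (which only modifies the $\textit{Queried}$ and $\textit{Mistake}$ arrays for the \emph{other} endpoints $v$ and $w$ of triangles through the pivot). I do not anticipate a genuine obstacle here; the lemma is essentially a well-formedness check that the recursive structure of \textsc{QueryPivot} yields disjoint, exhaustive clusters, and the induction goes through once $\textit{pivot} \in C \subseteq V$ is established.
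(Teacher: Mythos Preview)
Your proposal is correct and follows essentially the same approach as the paper: both argue that the pivot is never removed from $C$ (so $|V\setminus C|<|V|$ and the recursion terminates), and that $C$ is disjoint from the set passed to the recursive call, yielding a partition by induction. Your version is slightly more explicit than the paper's (you spell out the base case and the exhaustiveness of the cover, whereas the paper only argues disjointness and termination), but the underlying argument is the same.
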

\begin{proof}
	Note that the pivot is never removed from $ C $. Hence, between each pair of consecutive recursive calls, at least one vertex is removed from $ V $. The algorithm must then terminate after at most $ n $ recursive calls. Moreover, in each recursive call, the set of vertices passed to the next recursive call is disjoint from the cluster created in that recursive call. Thus, inductively, the sets returned by the algorithm must be disjoint.
\end{proof}

\begin{lem}
	\label{lem:suboptimal1}
	Consider a clustering $\mathcal{C}$ in which some cluster $C$ contains vertices $u, v$ s.t. $\{u, v\}$ is a $ - $ edge and s.t. $u$ and $v$ do not form a $(+, +, -)$ triangle with any other vertex in $C$. $\mathcal{C} $ is suboptimal.
\end{lem}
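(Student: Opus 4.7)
The plan is to construct a clustering $\mathcal{C}'$ with strictly fewer disagreements than $\mathcal{C}$ by moving one of $u$ or $v$ out of $C$ into a new singleton cluster, and then to appeal to pigeonhole between the two candidate moves. For $x \in \{u,v\}$, let $\Delta_x$ denote the change in the total number of disagreements when $x$ is split off from $C$.

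First I would compute $\Delta_x$. The only edges whose intra/inter-cluster status changes are those between $x$ and $C \setminus \{x\}$. The previously-agreeing intra-cluster $+$ edges from $x$ become inter-cluster $+$ disagreements, while the previously-disagreeing intra-cluster $-$ edges from $x$ become inter-cluster $-$ agreements, giving
\[
\Delta_x \;=\; |N_C^+(x)| \;-\; |N_C^-(x)|.
\]

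Next I would unpack the hypothesis. Since $\{u,v\}$ is a $-$ edge, the absence of a $(+,+,-)$ triangle $(u,v,w)$ inside $C$ forces, for every $w \in C \setminus \{u,v\}$, that $\{u,w\}$ and $\{v,w\}$ cannot both be $+$; hence at least one of them is $-$.

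The key step is to sum the two candidate changes. The edge $\{u,v\}$ itself is $-$, so $v$ contributes $-1$ to $\Delta_u$ and $u$ contributes $-1$ to $\Delta_v$, accounting for a total of $-2$. For every other $w \in C \setminus \{u,v\}$, the contribution to $\Delta_u + \Delta_v$ is $0$ when exactly one of $\{u,w\}, \{v,w\}$ is $-$ (a $+1$ in some $|N_C^+|$ term cancels a $-1$ in some $|N_C^-|$ term) and $-2$ when both are $-$; the ``both $+$'' case is excluded by the hypothesis. Hence $\Delta_u + \Delta_v \leq -2$, so at least one of $\Delta_u, \Delta_v$ is $\leq -1$, and splitting off the corresponding vertex yields a strictly better clustering, contradicting optimality of $\mathcal{C}$.

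The proof is essentially a short edge-accounting argument and has no deep obstacle. The one conceptual point worth flagging is that neither individual split is guaranteed to help on its own (for example, $u$ might have many $+$-neighbors in $C$), so the argument must consider both candidate splits simultaneously; this is precisely where the ``no $(+,+,-)$ triangle in $C$'' hypothesis enters, by ruling out the only $w$-type whose contribution to $\Delta_u + \Delta_v$ could be positive.
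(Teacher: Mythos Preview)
Your proof is correct and uses essentially the same approach as the paper: both argue that moving one of $u,v$ into a singleton strictly reduces the cost, using the hypothesis to force $N_C^+(v)\subseteq N_C^-(u)$ and $N_C^+(u)\subseteq N_C^-(v)$. The only difference is packaging: the paper does a case split (try removing $v$; if that fails, the resulting inequalities force removing $u$ to succeed), whereas you sum the two changes directly to get $\Delta_u+\Delta_v\le -2$ and apply pigeonhole, which is a slightly more symmetric version of the same computation.
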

\begin{proof}
	If we were to remove $ v $ from $ C $ and put it in a singleton cluster, we would make $ |N_C^-(v)|-|N_C^+(v)| $ fewer mistakes than $ \mathcal{C} $. If $ |N_C^-(v)| - |N_C^+(v)| > 0 $, then $ \mathcal{C} $ is suboptimal. Therefore, assume $ |N_C^-(v)| \leq |N_C^+(v)| $.  Now, note that $ \forall w \in N_C^+(v) $, $ w \in N_C^-(u) $ because otherwise, $ u $, $ v $, and $ w $ form a $ (+, +, -) $ triangle. Thus $|N_C^+(v)| \leq |N_C^-(u)|-1$ because $ v \in N_C^-(u) $. Moreover, $ |N_C^+(u)| \leq |N_C^-(v)| - 1 $ because $ u \in N_C^-(v) $.
	
	Hence, if we were to remove $ u $ from $ C $ and put it in a singleton cluster, we would make $ |N_C^-(u)|-|N_C^+(u)| \geq |N_C^+(v)| - |N_C^-(v)| + 2 $ fewer mistakes than $ \mathcal{C} $. Since $ |N_C^-(v)| - |N_C^+(v)| \leq 0 $, $ |N_C^+(v)| - |N_C^-(v)| + 2 > 0 $, so $ \mathcal{C} $ is suboptimal.
\end{proof}

\begin{lem}
	\label{lem:suboptimal2}
	Consider a clustering $ \mathcal{C} $ in which a cluster $ C_1 $ contains a vertex $ u $, a different cluster $ C_2 $ contains a vertex $ v $, $ \{u, v\} $ is a $ + $ edge, and in every $ (+, +, -) $ triangle that includes $ \{u, v\} $, the clustering makes at least $ 2 $ edge mistakes. $ \mathcal{C} $ is suboptimal.
\end{lem}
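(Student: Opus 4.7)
The plan is to show that under the hypothesis, at least one of the two natural moves---reassigning $u$ to $C_2$, or reassigning $v$ to $C_1$---strictly decreases the total number of mistakes, which certifies that $\mathcal{C}$ is not optimal.

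First I would unpack the hypothesis. Every $(+,+,-)$ triangle $\{u,v,w\}$ through the $+$ edge $\{u,v\}$ has exactly one $+$ and one $-$ among $\{u,w\},\{v,w\}$, so $w$ sits in one of two edge patterns relative to $\{u,v\}$. A short case analysis on the cluster containing $w$ (inside $C_1$, inside $C_2$, or some third cluster) shows that $\mathcal{C}$ contributes exactly one mistake to such a triangle in precisely two scenarios: (i) $\{u,w\}=+$, $\{v,w\}=-$, and $w\in C_1$; or (ii) $\{u,w\}=-$, $\{v,w\}=+$, and $w\in C_2$. The hypothesis forbids both, so every $w\in C_1\setminus\{u\}$ with $\{u,w\}=+$ satisfies $\{v,w\}=+$, and every $w\in C_2\setminus\{v\}$ with $\{v,w\}=+$ satisfies $\{u,w\}=+$.

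Let $a=|N^+(u)\cap C_1|$, $b=|N^-(u)\cap(C_1\setminus\{u\})|$, $e=|N^+(v)\cap(C_1\setminus\{u\})|$, $f=|N^-(v)\cap(C_1\setminus\{u\})|$, and define $c,d,g,h$ analogously on $C_2\setminus\{v\}$ (with $c,d$ for $u$ and $g,h$ for $v$). The two containments above give $a\le e$ and $g\le c$; combined with the identities $a+b=e+f=|C_1|-1$ and $c+d=g+h=|C_2|-1$, they flip to $b\ge f$ and $h\ge d$. A direct accounting of how each edge incident to the moved vertex changes status yields
\[\Delta_u \;=\; -1 + (a-b) + (d-c), \qquad \Delta_v \;=\; -1 + (f-e) + (g-h),\]
where the $-1$ accounts for $\{u,v\}$ ceasing to be a mistake after the move. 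Summing and regrouping,
\[\Delta_u+\Delta_v \;=\; -2 + (a-e) + (f-b) + (d-h) + (g-c),\]
and each parenthesized term is non-positive by the inequalities just derived. Hence $\Delta_u+\Delta_v\le -2$, so at least one of $\Delta_u,\Delta_v$ is at most $-1$, and performing the corresponding reassignment yields a clustering strictly better than $\mathcal{C}$.

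The main subtlety I anticipate is the triangle bookkeeping in step one: for both triangle types and all three possible cluster locations of $w$, checking that the current number of mistakes is at least two except in exactly the two configurations singled out above. Once that is done, the remainder is arithmetic using only the cluster-cardinality identities for $C_1\setminus\{u\}$ and $C_2\setminus\{v\}$.
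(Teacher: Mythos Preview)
Your proof is correct and follows essentially the same approach as the paper: both arguments consider the two moves of reassigning $u$ to $C_2$ or $v$ to $C_1$, use the hypothesis to derive the containments $N^+(u)\cap C_1\subseteq N^+(v)\cap(C_1\setminus\{u\})$ and $N^+(v)\cap C_2\subseteq N^+(u)\cap(C_2\setminus\{v\})$, and conclude that at least one move strictly improves the cost. The only cosmetic difference is that the paper does a case split (if moving $u$ does not help, then show moving $v$ helps by at least $4$), whereas you sum the two increments to get $\Delta_u+\Delta_v\le -2$; both are the same underlying computation.
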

\begin{proof}
	If we were to remove $ u $ from $ C_1 $ and put it in $ C_2 $, we would make $ |N_{C_2}^+(u)| + |N_{C_1}^-(u)| - |N_{C_2}^-(u)| - |N_{C_1}^+(u)| = 2|N_{C_2}^+(u)| + |C_1| - |C_2| - 2|N_{C_1}^+(u)| $ fewer mistakes than $ \mathcal{C} $. If $ 2|N_{C_2}^+(u)| + |C_1| - |C_2| - 2|N_{C_1}^+(u)| > 0 $, then $ \mathcal{C} $ is suboptimal. Otherwise, note that $ \forall w \in N_{C_1}^+(u) $, $ w \in N_{C_1}^+(v) $ because if not, $ u, v, w $ would form a $ (+, +, -) $ triangle in which the algorithm makes fewer than $ 2 $ edge mistakes. By a similar argument, $ \forall w \in N_{C_2}^+(v) $, $ w \in N_{C_2}^+(u) $. Thus, since in addition, $ \{u, v\} $ is a $ + $ edge, we have that $ |N_{C_1}^+(v)| \geq |N_{C_1}^+(u)|+1 $ and $ |N_{C_2}^+(u)| \geq |N_{C_2}^+(v)|+1 $. Now if we were to remove $ v $ from $ C_2 $ and put it in $ C_1 $, the number of mistakes will reduce by $ |N_{C_1}^+(v)|+|N_{C_2}^-(v)| - |N_{C_1}^-(v)| - |N_{C_2}^+(v)| = 2|N_{C_1}^+(v)| + |C_2| - |C_1| - 2|N_{C_2}^+(v)| $. Since $ |N_{C_1}^+(v)| \geq |N_{C_1}^+(u)|+1 $ and $ |N_{C_2}^+(v)| \leq |N_{C_2}^+(u)|-1 $, we have that
	$ 2|N_{C_1}^+(v)| + |C_2| - |C_1| - 2|N_{C_2}^+(v)| \geq 2(|N_{C_1}^+(u)|+1) + |C_2| - |C_1| - 2(|N_{C_2}^+(u)-1) $
	. Since $ 2|N_{C_2}^+(u)| + |C_1| - |C_2| - 2|N_{C_1}^+(u)| \leq 0 $, $ 2(|N_{C_1}^+(u)|+1) + |C_2| - |C_1| - 2(|N_{C_2}^+(u)-1) > 0 $, so $ \mathcal{C} $ is suboptimal.
\end{proof}

\begin{lem}
	\label{lem:querypivot_correctness}
	When given an oracle corresponding to an optimal clustering $ OPT $, the clustering returned by the \textsc{QueryPivot} algorithm is identical to $ OPT $. It follows that the algorithm's clustering makes at most as many mistakes as $ OPT $.
\end{lem}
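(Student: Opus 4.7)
The plan is to prove by induction on the recursion depth that \textsc{QueryPivot} produces $OPT$ cluster by cluster. The inductive invariant is that after any number of recursive calls, the vertices already removed form a union of complete $OPT$-clusters; the base case $V=\emptyset$ is vacuous. For the inductive step, let $u\in V$ be the pivot of the current call and let $C^*_u$ denote the $OPT$-cluster of $u$; the invariant gives $C^*_u\subseteq V$. Lines~28--32 show that $v\neq u$ ends up in $C$ iff either ($v\in N^+(u)$ and $\textit{Mistake}[v]=0$) or ($v\in N^-(u)$ and $\textit{Mistake}[v]=1$). Thus it suffices to establish the central claim that, at the end of the triangle loop,
\[ \textit{Mistake}[v]=1 \iff OPT \text{ makes a mistake on }\{u,v\}, \]
because this is exactly the condition $v\in C^*_u$.

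The ``only if'' direction is immediate: $\textit{Mistake}[v]$ is set to $1$ only when $\textit{Oracle}(u,v)$ returns $1$, and by definition the oracle truthfully reports whether $OPT$ mistakes $\{u,v\}$. For the ``if'' direction, I will exhibit a witness triangle that forces the algorithm to query $\{u,v\}$. Applied contrapositively to the optimal $OPT$, Lemmas~\ref{lem:suboptimal1} and~\ref{lem:suboptimal2} produce, whenever $OPT$ mistakes $\{u,v\}$, a vertex $w$ such that $(u,v,w)$ is a $(+,+,-)$ triangle and $\{u,v\}$ is $OPT$'s only mistake in that triangle: Lemma~\ref{lem:suboptimal1} gives such a $w$ with $w\in C^*_u$ when $\{u,v\}$ is a $-$ edge, and Lemma~\ref{lem:suboptimal2} gives $w$ in either $C^*_u$ or $v$'s $OPT$-cluster when $\{u,v\}$ is a $+$ edge. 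A short argument using the inductive hypothesis shows $w\in V$: if $w$ had been removed in a previous call, the removed cluster would have been an entire $OPT$-cluster containing $w$, hence containing $u$ or $v$, contradicting $u,v\in V$. Hence $(u,v,w)\in T$.

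The main obstacle is to show that the algorithm actually sets $\textit{Mistake}[v]=1$ when processing $(u,v,w)$, because $v$ or $w$ may have been queried earlier through an unrelated triangle. I plan to argue by contradiction: assume $\textit{Mistake}[v]=0$ at the end of the loop and case split on the state of $\textit{Queried}$ and $\textit{Mistake}$ the instant the triangle is reached, using that both arrays are monotone non-decreasing. If $\textit{Mistake}[w]$ were already $1$, then the oracle must have reported a mistake on $\{u,w\}$, contradicting that $OPT$ makes no mistake on $\{u,w\}$. If $\textit{Queried}[v]$ were already $1$ with $\textit{Mistake}[v]=0$, the oracle must have reported no mistake on $\{u,v\}$, again contradicting that $OPT$ mistakes $\{u,v\}$. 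In every remaining case the loop body reaches the query $\textit{Oracle}(u,v)$ and sets $\textit{Mistake}[v]=1$, yielding the final contradiction.

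With the central claim in hand, the cluster constructed in the current recursive call equals $C^*_u$, the invariant is preserved, and the induction completes, showing that the returned clustering equals $OPT$. The mistake count is an immediate corollary: identical partitions incur identical disagreements, so the algorithm makes exactly $C_{OPT}$ mistakes.
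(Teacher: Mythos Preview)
Your proposal is correct and follows essentially the same approach as the paper: an induction over recursive calls, maintained by the invariant that removed vertices form a union of complete $OPT$-clusters, with the inductive step driven by Lemmas~\ref{lem:suboptimal1} and~\ref{lem:suboptimal2}. The paper argues by contradiction on a discrepancy vertex $v$ (splitting into $v\notin C$ but $v\in C^*_u$, and $v\in C$ but $v\notin C^*_u$) and shows the absence of a suitable $(+,+,-)$ triangle would make $OPT$ suboptimal; you instead apply the lemmas contrapositively to \emph{produce} a witness triangle and then track the algorithm's state to force $\textit{Mistake}[v]=1$. These are the same argument viewed from opposite directions, and your central claim about the $\textit{Mistake}$ array is a clean way to package it. One small point to make explicit when you write it out: the algorithm may enumerate the witness triangle as $(\textit{pivot},w,v)$ rather than $(\textit{pivot},v,w)$, so your final case split should note that even in that ordering (where $w$ is tried first at line~15 and returns no mistake), line~21 still reaches $\textit{Oracle}(u,v)$.
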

\begin{proof}
	We will prove inductively that in each recursive call, the cluster $ C $ returned by the algorithm is a cluster in $ OPT $. Note that at the beginning of the first recursive call, the claim that all clusters formed so far are clusters in $ OPT $ is vacuously true because there are no clusters yet formed. Now consider an arbitrary but particular recursive call, and let $ u $ be the pivot in this recursive call. Suppose for contradiction that $ C $ is not a cluster in $ OPT $.\\
	{\it Case 1}: There is a vertex $ v $ such that $ v \notin C $, but in $ OPT $, $ v $ is in the same cluster as $ u $. Let $ H $ be the cluster in $ OPT $ that contains $ u $ and $ v $. First, observe that $ H $ must be a subset of the remaining vertices in this recursive call; otherwise, one of the clusters formed in a previous call contains some vertex in $ H $ but does not include $ u $, contradicting the induction hypothesis because this previously formed cluster is not a cluster in $ OPT $. Next, note that for any mistake that the algorithm makes on an edge incident on a pivot, the algorithm queries the $ OPT $ oracle and makes the mistake iff $ OPT $ makes the mistake. Then if $ \{u, v\} $ is a $ + $ edge, then the algorithm must have queried the oracle for $ \{u, v\} $ and found that $OPT$ makes a mistake on it because the algorithm decided to make a mistake on that edge. This implies that $ OPT $ puts $ u $ and $ v $ in different clusters, which is a contradiction. Now suppose instead that $ \{u, v\} $ is a $ - $ edge. Again if the algorithm queried the oracle for $ \{u, v\} $, then $ OPT $ must have put $ u $ and $ v $ in different clusters, so it must be the case that the algorithm did not query the oracle for $ \{u, v\} $. It follows that for any $ (+, +, -) $ triangle $ (u, v, w) $ that includes $ \{u, v\} $, our algorithm has queried $\{u,w\}$ and found $ OPT $ makes a mistake on the $ + $ edge $ \{u, w\} $. Then for any such triangle, $ w \notin H $. It follows that $ u $ and $ v $ do not form a $ (+, +, -) $ triangle with any vertex in $ H $. Since $ u $ and $ v $ are in the same cluster $ H $ in $ OPT $, $ \{u, v\} $ is a $ - $ edge, and $ u $ and $ v $ do not form a $ (+, +, -) $ triangle with any other vertex in $ H $, the conditions for Lemma ~\ref{lem:suboptimal1} are satisfied. Therefore, $ OPT $ is a suboptimal clustering, which is a contradiction.\\
	{\it Case 2}: There is a vertex $ v $ such that $ v \in C $, but in $ OPT $, $ v $ is not in $ u $'s cluster. As in the first case, if $ \{u, v\} $ were a $ - $ edge, the algorithm must make a mistake on $ \{u, v\} $ and so must have queried $ OPT $ and found that $ OPT $ made a mistake on $ \{u, v\} $, a contradiction. Now suppose instead that $ \{u, v\} $ is a $ + $ edge. If there is some vertex $ w $ that was clustered prior to this recursive call s.t. $ \{u, v, w\} $ is a $ (+, +, -) $ triangle in which $ OPT $ makes exactly one mistake (on $ \{u, v\} $), then note that either $ u $ or $ v $ should be in the same cluster as $ w $ because one of $ \{u, w\} $ and $ \{v, w\} $ must be a $ + $ edge; in this case, we have reached a contradiction with the inductive hypothesis because the previously formed cluster that included $ w $ did not include $ u $ or $ v $. Then in order to show that the conditions for Lemma ~\ref{lem:suboptimal2} are satisfied, we must show that for every vertex $ w $ in the set of remaining vertices when $ u $ is the pivot, if $ (u, v, w) $ is a $ (+, +, -) $ triangle, then $ OPT $ must make at least two mistakes in the triangle. Since $ OPT $ makes a mistake on $ \{u, v\} $ but the algorithm does not do so, it must be the case that the algorithm did not query $ \{u, v\} $. Since the algorithm did not query $ \{u, v\} $, for every $ (+, +, -) $ triangle $ (u, v, w) $ that includes $ \{u, v\} $ and such that $ w $ is in the set of remaining vertices when $ u $ is the pivot, $ OPT $ must make a mistake on $ \{u, w\} $. Then since $ OPT $ makes a mistake on $ \{u, v\} $ and on $ \{u, w\} $ in any $ (+, +, -) $ triangle $ (u, v, w) $, we have by Lemma ~\ref{lem:suboptimal2} that $ OPT $ is suboptimal clustering, which is a contradiction.
\end{proof}
\begin{lem}
	\label{lem:querypivot_queries}
	Let $ C_{OPT} $ be the number of mistakes made by an optimal clustering $ OPT $. Then the \textsc{QueryPivot} algorithm makes at most $ 2C_{OPT} $ queries to the oracle.
\end{lem}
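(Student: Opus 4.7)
The plan is a charging argument: assign each oracle query issued by \textsc{QueryPivot} to an OPT mistake in such a way that no mistake receives more than two charges. Since the number of distinct chargeable mistakes is at most $C_{OPT}$, this yields the $2C_{OPT}$ bound.

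First I would classify each iteration of a $(+,+,-)$ triangle $(u,v,w)$ that actually issues queries. A read of the pseudocode shows it falls into one of four cases: (i) a single successful query; (ii) a single unsuccessful query, which only happens when the other of $v,w$ was already queried in a previous iteration of the same pivot and returned clean; (iii) two queries, the first on $v$ unsuccessful and the second on $w$ successful; or (iv) two queries, both unsuccessful. In cases (i) and (iii) the oracle has just certified a mistake on a pivot edge, so I charge the (one or two) queries of that iteration to this mistake. In cases (ii) and (iv) the oracle has certified that OPT makes no mistake on either of $\{u,v\}$ or $\{u,w\}$; since $(u,v,w)$ is a $(+,+,-)$ triangle and every optimal clustering must err on at least one edge of each such triangle, OPT is forced to err on $\{v,w\}$, and I charge the queries of the iteration to this implied mistake.

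Finally, I would verify that every mistake receives at most two charges. Pivot-edge mistakes are easy: the $\textit{Queried}$ flag guarantees that a given edge $\{u,x\}$ is queried at most once during $u$'s recursive call, so it is charged in at most one iteration, and that iteration contributes at most $2$ charges (from case (iii)). For a non-pivot-edge mistake on $\{v,w\}$, within a single pivot's iteration only the unique triangle $(u,v,w)$ can charge it, and it contributes at most $2$. The main obstacle is to rule out a later pivot $u'$ also processing $(u',v,w)$ and charging the same mistake. I would settle this by noting that in cases (ii) and (iv) we have $\textit{Mistake}[v]=\textit{Mistake}[w]=0$ when $u$'s cluster is built; because a $(+,+,-)$ triangle contains exactly one negative edge, at least one of $\{u,v\},\{u,w\}$ is positive, so the cluster-construction rule (lines 28--32) places at least one of $v,w$ into $u$'s cluster. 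That vertex is removed from $V$ before any subsequent recursive call, so no later pivot can process a triangle containing both $v$ and $w$. Summing the charges then yields the claimed $2C_{OPT}$ bound.
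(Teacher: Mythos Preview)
Your proof is correct and rests on the same charging idea as the paper's---associate each query-issuing triangle iteration with a distinct mistake and note that such an iteration issues at most two queries---but your route is noticeably more self-contained. The paper's proof is a two-liner: each triangle iteration that queries makes at most two queries and accounts for at least one new mistake of the \emph{algorithm's} output clustering; it then invokes Lemma~\ref{lem:querypivot_correctness} to conclude the algorithm makes exactly $C_{OPT}$ mistakes. You instead charge directly to \emph{$OPT$'s} mistakes, using only the trivial fact that any clustering errs on some edge of each $(+,+,-)$ triangle, so your bound on the query count stands independently of the (substantially harder) correctness lemma. Your case analysis---particularly the observation that after a clean-query iteration at least one of $v,w$ has a positive pivot edge with $\textit{Mistake}=0$ and therefore lands in the pivot's cluster, preventing any later pivot from re-processing $\{v,w\}$---spells out precisely the injectivity detail the paper's one-sentence argument leaves implicit. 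What the paper's route buys is brevity once optimality is already in hand; what yours buys is modularity: the $2C_{OPT}$ query bound holds regardless of whether the output is optimal.
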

\begin{proof}
	The algorithm queries the oracle only when considering $ (+, +, -) $ triangles. Note that whenever considering a particular $ (+, +, -) $ triangle, if the algorithm makes a query, it makes at most two queries when considering that triangle and makes at least one mistake that had not been made when considering previous triangles. Therefore, the algorithm makes at most twice as many queries as mistakes. Since the algorithm makes exactly $ C_{OPT} $ mistakes, the algorithm makes at most $ 2C_{OPT} $ queries.
\end{proof}
Theorem ~\ref{thm:querypivot} follows directly from Lemmas ~\ref{lem:querypivot_correctness} and ~\ref{lem:querypivot_queries}.

\section{A 2-Approximation Algorithm for Correlation Clustering}
\label{sec:approx}
A natural question that arises from \textsc{QueryPivot} is how to use fewer queries and obtain an approximation guarantee that is better than the state-of-the-art outside the setting with same-cluster queries, which is a $ 2.06 $-approximation. In this section, we show that a randomized version of \textsc{QueryPivot} gives a $ 2 $-approximation in expectation using at most $ C_{OPT} $ queries in expectation.

The algorithm \textsc{RandomQueryPivot}$ (p) $ is as follows. We pick a pivot $ u $ uniformly at random from the vertices yet to be clustered. For each $ (+, +, -) $ triangle $ (u, v, w) $, we have two cases. (1) If $ \{u, v\} $ and $ \{u, w\} $ are both $ + $ edges, then with probability $ p $ (chosen appropriately), we query both $ \{u, v\} $ and $ \{u, w\} $ and for each of these two edges we make a mistake on the edge iff $ OPT $ makes a mistake on the edge. With probability $ 1-p $ we make no queries for this triangle and proceed to the next triangle.
(2) If one of $ \{u, v\} $ and $ \{u, w\} $ is a $ + $ edge and the other is a $ - $ edge, then with probability $ p $, we do the following. First, we query the $ + $ edge and if $ OPT $ makes a mistake on it, then we make a mistake on it and proceed to the next triangle. If $ OPT $ does not make a mistake on the $ + $ edge, then we query the $ - $ edge and make a mistake on the $ - $ edge iff $ OPT $ does so. Again, with probability $ 1-p $ we make no queries for this triangle and proceed to the next triangle.
Once we have gone through all triangles, if we have not already decided to make a mistake on $ \{u, v\} $, then if $ \{u, v\} $ is a $ + $ edge we keep $ v $ in $ u $'s cluster and if $ \{u, v\} $ is a $ - $ edge we do not put $ v $ in $ u $'s cluster. On the other hand, if we have decided to make a mistake on $ \{u, v\} $, then if $ \{u, v\} $ is a $ - $ edge we keep $ v $ in $ u $'s cluster and if $ \{u, v\} $ is a $ + $ edge we do not put $ v $ in $ u $'s cluster. Finally, we remove all vertices in $ u $'s cluster from the set of remaining vertices and recursively call the function on the set of remaining vertices. Note that given a pivot $u$ and a $(+,+,-)$ triangle containing $u$, if the algorithm chooses not to query either of the edges incident on $u$, then the algorithm must make a mistake on the edge opposite to $u$ in that triangle.

\begin{algorithm}[tb]
	\caption{\textsc{RandomQueryPivot}}
	\label{alg:approx}
	\begin{algorithmic}[1]
		\STATE {\bfseries Input:} vertex set $V$, adjacency matrix $A$, oracle $ \textit{Oracle} $, parameter $ p $
		\IF{$ V == \emptyset $}
		\STATE {\bfseries return} $ \emptyset $
		\ENDIF
		\STATE $ \textit{pivot}  \gets $ Random vertex in $ V $
		\STATE $ \textit{T} \gets $ all $  (+, +, -)  $ triangles that include \textit{pivot}
		\STATE $ \textit{C} \gets \textit{V} $
		\STATE $ \textit{Queried} \gets $ length-n array of zeros
		\STATE $ \textit{Mistakes} \gets $ length-n array of zeros
		\FOR{$ (\textit{pivot}, v, w) \in T $}
		\STATE // Without loss of generality, suppose that $ \{\textit{pivot}, v\} $ is a $ + $ edge
		\STATE Sample $ r $ from $ Uniform(0, 1) $
		\IF{$ r > p $}
		\STATE continue
		\ENDIF
		\IF{$ \textit{Oracle}(\textit{pivot}, v) == 1 $}
		\STATE $ \textit{Mistake}[v] \gets 1 $
		\ENDIF
		\IF{$ Mistake[v] == 0 $ {\bf or} $ \{\textit{pivot}, w\} $ is a $ + $ edge}
		\IF{$ \textit{Oracle}(\textit{pivot}, w) == 1 $}
		\STATE $ \textit{Mistake}[v] \gets 1 $
		\ENDIF
		\ENDIF
		\ENDFOR
		\FOR{$ v \in V \setminus \{\textit{pivot}\} $}
		\IF{$ (v \in N^-(\textit{pivot}) $ {\bf and} $ \textit{Mistake}[v] == 0) $ {\bf or} $ (v \in N^+(\textit{pivot}) $ {\bf and} $ \textit{Mistake}[v] == 1) $}
		\STATE $ C = C \setminus \{v\} $
		\ENDIF
		\ENDFOR
		\STATE {\bfseries return} $ \{C\} \cup \textsc{RandomQueryPivot}(V \setminus C, A, \textit{Oracle}) $
	\end{algorithmic}
\end{algorithm}
\begin{thm}
	\label{thm:randomquerypivot}
	\textsc{RandomQueryPivot}$ (p) $ gives a $ \max\left(2, \frac{3}{1+2p}\right) $-approximation in expectation and uses at most $ \max(4p, 1)*C_{OPT} $ queries in expectation.
\end{thm}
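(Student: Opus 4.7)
The plan is to adapt the per-triangle charging framework used in the analysis of the classical pivot algorithm (Ailon et al.) to \textsc{RandomQueryPivot}'s two querying cases, treating the approximation bound and the query bound as two separate per-triangle arguments.

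First I would establish the structural observation that every algorithm mistake on some edge $e$, and every query the algorithm makes, is attributable to a $(+,+,-)$ triangle $T$ that is ``resolved'' at some pivot $u \in T$---meaning $u$ was chosen as pivot while all three vertices of $T$ still remained in the active set. Conditioning on this resolution event, with probability $p$ the triangle is queried and the algorithm's mistakes on the two pivot-incident edges of $T$ match $OPT$ exactly (the opposite edge's fate being determined by the resulting cluster memberships), while with probability $1-p$ the algorithm makes exactly one mistake, on the edge of $T$ opposite to $u$. I would enumerate sub-cases---Case~(1) with two $+$ edges from $u$, versus Case~(2) with one $+$ and one $-$ edge from $u$---cross-tabulated against $OPT$'s possible mistake pattern on $T$ (one, two, or three mistakes on specific edges) to compute both the expected algorithm cost and the expected number of queries attributable to $T$.

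For the approximation ratio, after averaging over the three possible pivot choices within $T$, the per-triangle expected algorithm cost is at most $\max\bigl(2,\tfrac{3}{1+2p}\bigr)$ times the contribution of $T$ to the standard LP lower bound on $OPT$ (the LP with constraint $x_{e_1}+x_{e_2}+x_{e_3}\geq 1$ for each $(+,+,-)$ triangle). Summing over triangles and invoking LP duality transfers the per-triangle ratio to the claimed global approximation. The floor of $2$ reflects that even with querying, the algorithm can make two mistakes on a triangle where $OPT$ makes only one; for small $p$ the $\tfrac{3}{1+2p}$ expression dominates because the algorithm rarely queries and effectively degenerates toward the classical no-query pivot algorithm whose ratio is $3$.

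For the query bound, a resolution of $T$ at pivot $u$ produces two queries with probability $p$ in Case~(1), whereas in Case~(2) it produces one query with probability $p$ if $OPT$ mistakes the $+$ edge incident to $u$, and two queries with probability $p$ otherwise. Charging each query to a known $OPT$-mistake edge---directly in Case~(2)'s one-query sub-case, and via the amortization supplied by $OPT$'s guaranteed presence of at least one mistake per $(+,+,-)$ triangle in the remaining sub-cases---combined with the classical resolution probabilities gives the claimed $\max(4p,1)\,C_{OPT}$ bound. The kink at $p=\tfrac{1}{4}$ arises from balancing the always-two-queries rate of Case~(1) against the $OPT$ mistakes available for amortization.

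The main technical obstacle will be the tabular case analysis underpinning the approximation bound: each $(+,+,-)$ triangle may distribute one, two, or three $OPT$-mistakes across its three edges, and the interaction between the query branch (which matches $OPT$'s pivot-incident mistakes) and the no-query branch (which makes the opposite-edge mistake), combined with the possibility of opposite-edge mistakes persisting into later recursive calls, requires careful bookkeeping. The tight constants $2$ and $3/(1+2p)$ should emerge by identifying the worst-case sub-case of this enumeration.
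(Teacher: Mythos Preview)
Your structural observation---that every algorithm mistake is attributable to a $(+,+,-)$ triangle resolved at some pivot in that triangle---is false for \textsc{RandomQueryPivot}, and this is the central gap. In the classical (no-query) pivot algorithm this observation holds because the cluster around pivot $u$ is exactly $N^+(u)\cup\{u\}$, so any mistake on an edge $\{v,w\}$ with $v,w\neq u$ forces $\{u,v,w\}$ to be a $(+,+,-)$ triangle. But once querying is introduced, the algorithm may set $\textit{Mistake}[v]=1$ on a $+$ edge $\{u,v\}$ (because some bad triangle $(u,v,x)$ triggered a query), and then $v$ is excluded from $u$'s cluster while some other $+$-neighbor $w$ of $u$ stays in. If $\{v,w\}$ is a $+$ edge, the algorithm now makes a mistake on $\{v,w\}$ even though $\{u,v,w\}$ is a $(+,+,+)$ triangle, not a $(+,+,-)$ triangle. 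Your per-triangle LP framework has no handle on such spillover mistakes, and so the approximation argument cannot close.

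The paper's approach is accordingly quite different from the LP/triangle-charging route you outlined. It does not compare to an LP lower bound at all; instead it compares the algorithm's cost directly to $C_{OPT}$ via a per-iteration inequality $E[ALG^t\mid V_t]\le \alpha\, E[OPT^t\mid V_t]$ (where $OPT^t$ counts $OPT$-mistake edges that are decided in iteration $t$), and then wraps these per-iteration bounds into a global one with a submartingale/optional-stopping argument. The per-iteration bound is obtained by charging every algorithm mistake to some specific $OPT$-mistake edge $\{u,v\}$, and for each such edge partitioning $V_t$ according to the type of triangle $\{u,v,w\}$ and $OPT$'s mistake pattern on it: the sets $T^{uv}_1,T^{uv}_2,T^{uv}_3$ (bad triangles with $1,2,3$ $OPT$-mistakes), $S^{uv}_2$ (non-bad triangles with two $OPT$-mistakes, exactly the source of the spillover above), and a residual $R^{uv}$. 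The spillover mistake in the $(+,+,+)$ example is charged to $\{u,v\}$ or $\{u,w\}$ through the $S^{uv}_{2u}$ case; the paper then shows that the total charge to each $OPT$-mistake edge is at most $\max(2,3/(1+2p))$ times the probability that edge is decided in the iteration. The query bound is proved by the same charging-to-$OPT$-mistake-edges scheme, not by a triangle-LP argument. Your proposal does correctly anticipate that a tabular case analysis is the heart of the matter, but the cases are indexed by $OPT$-mistake edges and the surrounding vertex partition rather than by bad triangles, precisely to absorb the non-bad-triangle mistakes that querying creates.
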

\begin{cor}
	When $ p = 0.25 $, \textsc{RandomQueryPivot} gives a $ 2 $-approximation in expectation and uses at most $ C_{OPT} $ queries in expectation.
\end{cor}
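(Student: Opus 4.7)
The plan is to adapt the pivot-based charging analysis of Ailon--Charikar--Newman, exploiting the $p$-biased query step to sharpen the vanilla $3$-approximation. The proof bounds the expected cost and the expected number of queries separately, both in terms of the expected number of ``pivoted'' bad triangles.

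For each $(+,+,-)$ triangle $t = \{a,b,c\}$ with center $a$, let $\mathcal{P}_t$ denote the event that some vertex of $t$ becomes a pivot while all three vertices of $t$ remain active; conditioned on $\mathcal{P}_t$, by symmetry of the uniform pivot rule each of $a,b,c$ is equally likely to be the chosen pivot. A case analysis over the pivot choice and the query coin---in the spirit of Lemmas~\ref{lem:suboptimal1} and~\ref{lem:suboptimal2}---shows that when the algorithm queries, every mistake it makes on an edge of $t$ coincides with an OPT mistake on that same edge, and when it does not query it makes exactly one mistake on a specific predetermined edge of $t$ (the $-$ edge if the pivot is the center, the $+$ edge opposite the pivot otherwise). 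Writing $M_{uv}$ for the indicator that OPT makes a mistake on $\{u,v\}$ and $k_t = M_{ab}+M_{ac}+M_{bc} \ge 1$, averaging the case analysis gives a per-triangle expected matching-mistake count of $\tfrac{1}{3}\bigl[3p + pM_{ab}M_{ac} + (1-p)k_t\bigr] \ge \tfrac{1+2p}{3}$ and a per-triangle expected total-mistake count of $1 + pM_{ab}M_{ac}/3$.

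Attributing each algorithmic mistake to the triangle whose processing sealed it, distinct matching mistakes live on distinct OPT-mistake edges, so $\sum_t E[\text{matching on }t] \le C_{OPT}$; combining with the per-triangle lower bound yields $\sum_t \Pr[\mathcal{P}_t] \le 3\,C_{OPT}/(1+2p)$. A termwise comparison using the elementary inequality $(3 + pM_{ab}M_{ac})/(3p + pM_{ab}M_{ac} + (1-p)k_t) \le 3/(1+2p)$ (easily verified after noting that $M_{ab}M_{ac}=1$ forces $k_t \ge 2$) then gives $E[\text{algo cost}] \le 3\,C_{OPT}/(1+2p) \le \max(2,\,3/(1+2p))\,C_{OPT}$. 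For the queries, each pivoted bad triangle triggers at most two queries, and only with probability $p$, so the expected number of queries is at most $2p \sum_t \Pr[\mathcal{P}_t]$; combined with the pivoting bound this simplifies to $\max(4p,\,1)\,C_{OPT}$ after a short case split on whether $p > 1/4$.

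The main obstacle is the charging bookkeeping: when the pivot is the center and the query reveals OPT mistakes on both pivot-incident edges, the algorithm seals two matching mistakes within the same triangle, so the naive ``one mistake per pivoted triangle'' counting fails. The remedy is to charge both mistakes to the same triangle (inflating the per-triangle expected total-mistake count to $1 + pM_{ab}M_{ac}/3$) and simultaneously observe that in this case the matching lower bound per triangle is also inflated (since $k_t \ge 2$), which is exactly the interplay that makes the algebraic ratio inequality hold.
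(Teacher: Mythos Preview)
Your argument has a genuine gap at the step where you combine the attribution inequality $\sum_t E[\text{matching on }t]\le C_{OPT}$ with the per-triangle lower bound to conclude $\sum_t \Pr[\mathcal{P}_t]\le 3C_{OPT}/(1+2p)$. The formula $\tfrac{1}{3}\bigl[3p+pM_{ab}M_{ac}+(1-p)k_t\bigr]$ is the expected number of matching mistakes on the edges of $t$ computed as if $t$ were the \emph{only} bad triangle containing the pivot. Once you attribute each algorithmic mistake to the single triangle whose processing sealed it, that lower bound no longer holds: if many bad triangles share a common OPT-mistake edge, the one matching mistake on that edge is credited to only one of them, and the remaining triangles receive zero.

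Concretely, take vertices $u,v,w_1,\dots,w_n$ with $(u,v)$, all $(u,w_i)$, and all $(w_i,w_j)$ labeled $+$, and all $(v,w_i)$ labeled $-$. Then OPT isolates $v$, so $C_{OPT}=1$, and the $n$ bad triangles $t_i=\{u,v,w_i\}$ all share the sole OPT-mistake edge $(u,v)$. One checks that $\mathcal{P}_{t_i}$ occurs iff the first pivot is one of $u,v,w_i$, so $\Pr[\mathcal{P}_{t_i}]=3/(n+2)$ and $\sum_i \Pr[\mathcal{P}_{t_i}]=3n/(n+2)\to 3$, whereas your claimed bound $3/(1+2p)$ equals $2$ at $p=1/4$. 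Both your cost bound and your query bound are derived from this false inequality, so neither conclusion is established.

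The paper's proof avoids this overcounting by abandoning the triangle-based accounting altogether. It charges the algorithm's mistakes edge-by-edge to OPT-mistake edges: for each edge $\{u,v\}$ with $c^*_{uv}=1$ it computes (or upper-bounds) the expected charge $E[M^t_{uv}\mid A^t_w]$ and compares it to $\Pr[D^t_{uv}\mid A^t_w]$ across all possible pivots $w$, partitioning $V_t$ according to the type of triangle $\{u,v,w\}$ forms and the OPT-mistake pattern inside it (the sets $T^{uv}_1,T^{uv}_2,S^{uv}_2,R^{uv}$). The ratio $\max(2,3/(1+2p))$ is then established per OPT-mistake edge, which is exactly what is needed to accommodate edges shared by many triangles.
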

\begin{lem}
    \label{lemma:queryequalprob}
    In an arbitrary but particular recursive call, the probability that \textsc{RandomQueryPivot} queries edge $ \{u, v\} $ on which $ OPT $ makes a mistake given that $ u $ is the pivot is equal to the probability that \textsc{RandomQueryPivot} queries edge $ \{u, v\} $ given that $ v $ is the pivot.
\end{lem}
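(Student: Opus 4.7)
My plan is to expand the event ``\textsc{RandomQueryPivot} queries $\{u,v\}$'' as the disjunction over iterations of the for-loop that sweeps through the $(+,+,-)$ triangles incident on the pivot, and to show that for each triangle the per-iteration query-event has the same probability regardless of whether $u$ or $v$ is the pivot. Write $W := \{w : (u,v,w) \text{ is a } (+,+,-) \text{ triangle}\}$; this set is determined purely by the edge labels on $\{u,v\}, \{u,w\}, \{v,w\}$ and so is identical in both conditional worlds, while the coin $r_w \sim \mathrm{Uniform}(0,1)$ sampled at the top of each iteration has the same distribution in both worlds. It therefore suffices to check that, conditional on $r_w \le p$, the indicator that the $w$-th iteration queries $\{u,v\}$ is symmetric in $(u,v)$.

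I would then split on the label of $\{u,v\}$. In Case~A, $\{u,v\}$ is a $+$ edge (so OPT, which mistakes it, separates $u$ and $v$). From the pseudocode, whenever $r_w\le p$ the algorithm queries every $+$ edge incident on the pivot that lies in the current triangle---either as the mandatory first query in the $\{+,-\}$ sub-case, or as both queries in the $\{+,+\}$ sub-case. Since $\{u,v\}$ is a $+$ edge incident on the pivot whether the pivot is $u$ or $v$, it gets queried with certainty conditional on $r_w\le p$ in both worlds, so both conditional probabilities equal $1-(1-p)^{|W|}$. In Case~B, $\{u,v\}$ is a $-$ edge (so OPT merges $u$ and $v$). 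Now for each $w\in W$ both $\{u,w\}$ and $\{v,w\}$ are $+$, putting the iteration in the $\{+,-\}$ sub-case; the algorithm queries the $-$ edge $\{u,v\}$ iff $r_w\le p$ and OPT does not mistake the $+$ edge incident on the pivot. The key observation is that because OPT places $u$ and $v$ in the same cluster, $w$ is OPT-separated from $u$ iff $w$ is OPT-separated from $v$, so OPT mistakes $\{u,w\}$ iff it mistakes $\{v,w\}$. Hence, letting $W^{\ast}\subseteq W$ be the $w$'s where this common mistake-event fails, both conditional probabilities again equal $1-(1-p)^{|W^{\ast}|}$.

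The only genuine obstacle is the Case~B identification of OPT's mistake-events on $\{u,w\}$ and $\{v,w\}$, which is where the hypothesis that OPT makes a mistake on $\{u,v\}$ is actually used; everything else is straightforward bookkeeping through the two sub-cases of the triangle loop and a check that the WLOG renaming in the pseudocode always selects the $+$ edge incident on the pivot regardless of which of $u,v$ is the pivot.
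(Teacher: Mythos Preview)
Your proposal is correct and follows essentially the same approach as the paper: both split on the sign of $\{u,v\}$, obtain $1-(1-p)^{|W|}$ in the $+$ case, and in the $-$ case use the key observation that since $u,v$ lie in the same $OPT$ cluster, $OPT$ mistakes $\{u,w\}$ iff it mistakes $\{v,w\}$, so the count of triangles in which the $-$ edge actually gets queried is pivot-independent. The paper phrases the $-$ case as ``the number of $(+,+,-)$ triangles in which $OPT$ makes a mistake only on the $-$ edge,'' which is exactly your set $W^{\ast}$.
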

\begin{proof}
    For a $ + $ edge $ \{u, v\} $ on which $ OPT $ makes a mistake, the probability that the edge is queried given that one of the vertices is the pivot is a function only of the number of $ (+, +, -) $ triangles that include the edge. In particular, if $ T $ is the number of $ (+, +, -) $ triangles including the edge, the probability that the edge is queried is $ 1-(1-p)^T $. This number of triangles does not depend on the pivot vertex, so the claim holds if $ \{u, v\} $ is a $ + $ edge. If $ \{u, v\} $ is a $ - $ edge, then we claim that the probability that $ \{u, v\} $ is queried given that either $ u $ or $ v $ is a function only of the number of $ (+, +, -) $ triangles that include $ \{u, v\} $ in which $ OPT $ makes a mistake only on this $ - $ edge. This claim is true because (1) in any $ (+, +, -) $ triangle in which $ OPT $ makes a mistake on the $ - $ and a $ + $ edge, $ OPT $ must make a mistake on all of the three edges in the triangle and (2) when considering a $ (+, +, -) $ triangle such that the pivot is an endpoint of the $ - $ edge, the algorithm queries the $ - $ edge iff $ OPT $ does not make a mistake on the $ + $ edge of which the pivot is an endpoint. It follows that for any $ (+, +, -) $ triangle in which the algorithm queries the $ - $ edge, $ OPT $ must make a mistake only on the $ - $ edge. Since the number of $ (+, +, -) $ triangles that include $ \{u, v\} $ in which $ OPT $ makes a mistake only on $ \{u, v\} $ does not depend on whether $ u $ or $ v $ is the pivot, the claim holds when $ \{u, v\} $ is a $ - $ edge.
\end{proof}
Let $ s_{uv} = 1 $ if $ \{u, v\} $ is a $ - $ edge and $ 0 $ otherwise. Let $ c^*_{uv} $ equal $ 1 $ if $ OPT $ makes a mistake on $ \{u, v\} $ and $ 0 $ otherwise. 

Let $ OPT^t $ be the number of edges $ \{u, v\} $ s.t. $ c^*_{uv} = 1 $ and the algorithm makes a decision on $ \{u, v\} $ in iteration $ t $. Let $ ALG^t $ be the number of edges $ \{u, v\} $ s.t. the algorithm makes a mistake on $ \{u, v\} $ and the algorithm makes a decision on $ \{u, v\} $ in iteration $ t $. 

Let $ V_t $ be the set of vertices remaining at the beginning of iteration $ t $. Let $ D^t_{uv} $ be the event that the algorithm makes a decision on $ \{u, v\} $ in iteration $ t $.
\begin{lem}
	\label{lemma:loop}
	%(see proof of Lemma 4 in \cite{Chawla-15}) 
	Let $ T $ be the number of iterations that the algorithm takes to cluster all vertices. If $ E[ALG^t | V_t] \leq \alpha E[OPT^t | V_t] $, for each iteration $ t $, then $ E\left[\sum_{t=1}^T ALG^t\right] \leq \alpha E\left[\sum_{t=1}^T OPT^t \right] $.
\end{lem}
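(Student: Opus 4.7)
The plan is to reduce this to a straightforward application of the tower property of conditional expectation plus linearity, with a small bit of bookkeeping to handle the fact that the stopping time $T$ is random.

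First, I would observe that $T$ is bounded by $n$ deterministically: as shown in the earlier lemma on validity, each recursive call removes at least one vertex (the pivot) from $V$, so the algorithm terminates in at most $n$ iterations. This means I can extend the sums to a deterministic range $t = 1, \dots, n$ by defining $V_t = \emptyset$ for $t > T$ and setting $ALG^t = OPT^t = 0$ whenever $V_t = \emptyset$. With this convention the hypothesis $E[ALG^t \mid V_t] \leq \alpha \, E[OPT^t \mid V_t]$ still holds trivially for $t > T$, since both sides are zero.

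Next, for each fixed $t \in \{1, \dots, n\}$, I would apply the tower property:
\begin{equation*}
E[ALG^t] = E\bigl[E[ALG^t \mid V_t]\bigr] \leq E\bigl[\alpha \, E[OPT^t \mid V_t]\bigr] = \alpha \, E[OPT^t].
\end{equation*}
Here I use that $V_t$ is a well-defined random variable (a deterministic function of the random pivots and oracle answers in iterations $1, \dots, t-1$), so conditional expectation with respect to $V_t$ is meaningful, and the hypothesized inequality is a pointwise statement in $V_t$, allowing me to take an outer expectation.

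Finally, I would sum over $t$ and exchange the sum with the expectation, which is justified since the sum has only $n$ terms (or, if one prefers, by nonnegativity and Tonelli):
\begin{equation*}
E\!\left[\sum_{t=1}^{T} ALG^t\right] = E\!\left[\sum_{t=1}^{n} ALG^t\right] = \sum_{t=1}^{n} E[ALG^t] \leq \alpha \sum_{t=1}^{n} E[OPT^t] = \alpha \, E\!\left[\sum_{t=1}^{T} OPT^t\right].
\end{equation*}
There is no real obstacle here; the only point that requires care is making sure that the random stopping time $T$ does not cause issues when interchanging sum and expectation, which is why I would pass through the deterministic upper bound $T \leq n$ before summing. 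Everything else is purely mechanical.
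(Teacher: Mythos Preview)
Your proof is correct and takes a genuinely different, more elementary route than the paper. The paper defines $X_s = \sum_{t=1}^s (\alpha\, OPT^t - ALG^t)$, argues that $(X_s)$ is a submartingale with respect to the natural filtration, observes that $T$ is a bounded stopping time, and then invokes Doob's optional stopping theorem to conclude $E[X_T] \geq E[X_0] = 0$. You instead use the deterministic bound $T \leq n$ to replace the random upper index by $n$, apply the tower property termwise to get $E[ALG^t] \leq \alpha\, E[OPT^t]$ for each fixed $t$, and sum. Your argument is shorter and avoids any martingale machinery; the paper's approach packages the same idea in the language of optional stopping, which would be the natural framework if $T$ were merely almost surely finite rather than deterministically bounded, but here that extra generality is not needed.
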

\begin{proof}
	Define $ X_0 = 0 $ and for each $ s > 0 $, define $ X_s = \sum_{t=1}^s \alpha OPT^t - ALG^t $. If the condition in the lemma holds, then $ X_s $ is a submartingale because $ E[X_{s+1}|X_s] \geq X_s $. Also, $ T $ is a stopping time that is almost surely bounded (since $ T \leq n $ with probability $ 1 $). By Doob's optional stopping theorem \citep[p.~100]{williams1991probability}, if $ T $ is a stopping time that is almost surely bounded and $ X $ is a discrete-time submartingale, then $ E[X_T] \geq E[X_0] $. Then we have that $ E[X_T] = E\left[\sum_{t=1}^T \alpha OPT^t - ALG^t\right] \geq E[X_0] = 0 $.
\end{proof}
\begin{lem}
	\label{lem:randomquery_mistake}
	The expected number of mistakes made by the algorithm's clustering is at most $ \max\left(2, \frac{3}{1+2p}\right)C_{OPT} $.
\end{lem}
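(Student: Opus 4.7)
The plan is to first invoke Lemma~\ref{lemma:loop}, which reduces the statement to a per-iteration bound: for every realization of $V_t$,
\[
E[ALG^t \mid V_t] \;\leq\; \max\!\left(2,\ \tfrac{3}{1+2p}\right)\, E[OPT^t \mid V_t].
\]
Fixing such an iteration, the uniformly random pivot $u$ together with the independent Bernoulli$(p)$ choices per $(+,+,-)$ triangle incident on $u$ determine the cluster $C$; the edges ``decided'' in iteration $t$ are precisely those with at least one endpoint in $C$, so both $ALG^t$ and $OPT^t$ are sums of mistake indicators over such edges.

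Next I would set up a charging argument in the spirit of Ailon--Charikar--Newman. Every algo mistake in iteration $t$ is witnessed by a $(+,+,-)$ triangle through the pivot, and every OPT mistake lies in some $(+,+,-)$ triangle (as OPT only errs on edges of bad triangles). For each bad triangle $T=(a,b,c)\subseteq V_t$, I would condition on the pivot being $a$, $b$, or $c$ in turn and case-split on the pivot's position in $T$ (apex of the two $+$ edges, or endpoint of the $-$ edge), crossed with OPT's mistake pattern on $T$. In the apex case, with probability $p$ the algo queries both $+$ edges and matches OPT on them, while with probability $1-p$ it pulls both non-pivot vertices into its cluster, incurring one mistake on the $-$ edge. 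In the endpoint case the pseudocode first queries the pivot-incident $+$ edge and only queries the $-$ edge when OPT does not err on the $+$ edge, which couples the algo's behavior on $T$ tightly to OPT's. Combining these subcases with Lemma~\ref{lemma:queryequalprob} (which makes averaging over the pivot symmetric) and with the fact that OPT must make at least one mistake on every $(+,+,-)$ triangle yields a per-triangle expected algo cost bounded by $\max(2,\,3/(1+2p))$ times OPT's.

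The main obstacle will be the bookkeeping required to aggregate these per-triangle bounds without double-counting edges that belong to many bad triangles, and to handle edges between two non-pivot vertices (e.g.\ a $-$ edge whose endpoints the algo has both pulled into its cluster) whose mistake is not directly queried but is forced by earlier queries in the same iteration. The factor $3/(1+2p)$ will emerge because, with probability $p$, querying erases two potential pivot-edge mistakes per triangle, improving on the baseline $3$-approximation of the unqueried ACN pivot algorithm. The floor of $2$ appears to come from worst-case configurations in which OPT separates the pivot from both other triangle vertices: there the algo pays two $+$-edge mistakes when it queries and one $-$-edge mistake when it does not, capping the improvement extractable from this particular charging no matter how close $p$ is to $1$.
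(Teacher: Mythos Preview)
Your opening move---reducing via Lemma~\ref{lemma:loop} to a per-iteration inequality---is exactly how the paper starts, and a charging argument is indeed what follows. But the per-\emph{triangle} charging you outline is not the paper's route, and the obstacle you flag (``aggregating without double-counting edges that belong to many bad triangles'') is the actual heart of the matter, which your sketch does not resolve.

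The difficulty is that the algorithm's behavior is determined at the \emph{edge} level, not the triangle level. Whether $\{u,v\}$ is queried (with $u$ the pivot) depends on \emph{all} $(+,+,-)$ triangles through that edge simultaneously: the query probability is $1-(1-p)^{T}$ where $T$ counts all such triangles (with the refinement for $-$ edges handled by Lemma~\ref{lemma:queryequalprob}). One query outcome then propagates to the algorithm's decisions in every triangle containing that edge. So there is no clean way to isolate ``the algorithm's expected cost on triangle $T$'' from its cost on overlapping triangles, and the per-triangle ratio you assert does not follow from the case split you describe.

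The paper instead charges per \emph{OPT-mistake edge}. For a fixed edge $\{u,v\}$ with $c^*_{uv}=1$, it partitions $V_t$ by how OPT behaves on the triangle $\{u,v,w\}$: the sets $T^{uv}_1$, $T^{uv}_{2u}$, $T^{uv}_{2v}$, $T^{uv}_3$, $S^{uv}_{2u}$, $S^{uv}_{2v}$, $R^{uv}$, and $\{u,v\}$. For each class of pivot $w$ it upper-bounds the charge $E[M^t_{uv}\mid A^t_w]$ and lower-bounds $\Pr[D^t_{uv}\mid A^t_w]$. After summing over $w\in V_t$ and using Lemma~\ref{lemma:queryequalprob}, most terms pair off with ratio at most $2$, and the residual inequality is
\[
2 + |T^{uv}_1| + 2\,|T^{uv}_1|\,(1-p)^{|T^{uv}_1|} \;\leq\; \max\!\left(2,\ \tfrac{3}{1+2p}\right)\bigl(|T^{uv}_1|+2\bigr),
\]
checked by hand for $|T^{uv}_1|\in\{0,1,2\}$ and by the elementary bound $(1-p)^{x-1}(1+2p)\leq 1$ for $x\geq 3$. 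So the factor $3/(1+2p)$ comes from this one-variable inequality in $|T^{uv}_1|$, not from a per-triangle ratio; and the floor of $2$ arises structurally from the $S^{uv}_2$ terms (triangles that are $(+,-,-)$ or $(+,+,+)$ with two OPT mistakes), where the charge and the decision-probability bound match with coefficient exactly $2$---not from the configuration you suggest.
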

\begin{proofsketch}
The full proof of Lemma~\ref{lem:randomquery_mistake} is given in the supplementary material. Here we give a sketch of the proof.
	By Lemma ~\ref{lemma:loop}, if we show that $ E[ALG^t - \alpha OPT^t] \leq 0 $ for any $ t $, where $ \alpha \leq \max\left(2, \frac{3}{1+2p}\right)$, then the claim will follow.  Let $ A^t_w $ be the event that $ w \in V_t $ is the pivot in iteration $ t $.
	\begin{align*}
		E[OPT^t | V_t] =&\; \sum_{\{u, v\} \subseteq E \cap (V_t \times V_t)} \frac{c^*_{uv}}{|V_t|} \sum_{w \in V_t} \Pr[D^t_{uv}|A^t_w]
	\end{align*}
	Now we will write $ E[ALG^t|V_t] $ by charging the algorithm's mistakes to each of $ OPT $'s mistakes. 
	
	Let $ M^t_{uv} $ be the charge incurred to $ \{u, v\} $ in iteration $ t $. We will assign charges such that $M^t_{uv}=0$ if $c^*_{uv}=0$. Then
	\begin{align*}
		E[ALG^t|V_t] =&\; \sum_{\{u, v\} \subseteq E \cap (V_t \times V_t)} c^*_{uv} E[M^t_{uv}]
		= \sum_{\{u, v\} \subseteq E \cap (V_t \times V_t)} c^*_{uv} \frac{1}{|V_t|} \sum_{w \in V_t} E[M^t_{uv}|A^t_w]
	\end{align*}
	
	Our goal is to compute an upper bound on $ E[M^t_{uv}|A^t_w] $. To do so, we define several events.
	
	For each edge $ \{u, v\} $ s.t. $ c^*_{uv} = 1 $, define the following subsets of $ V_t $: $ \{u, v\} $, $ \forall i \in \{1, 2, 3\} $, $ T^{uv}_i $ is the set of vertices $ w $ s.t. $ \{u, v, w\} $ is a $ (+, +, -) $ triangle in which $ OPT $ makes exactly $ i $ mistakes, $ S^{uv} $ is the set of vertices $ w $ s.t. $ \{u, v, w\} $ is a $ (+, -, -) $ or $ (+, +, +) $ triangle in which $ OPT $ makes exactly $ 2 $ mistakes, $ R^{uv} \equiv V_t \setminus T^{uv}_1 \setminus T^{uv}_2 \setminus S^{uv} \setminus \{u, v\} $. Furthermore, let $ T^{uv}_{2u} $ be the subset of $ T^{uv}_2 $ s.t. $ w \in T^{uv}_{2u} $ if the $ 2 $ mistakes in $ \{u, v, w\} $ are both incident on $ u $. Similarly, let $ S^{uv}_u $ be the subset of $ S^{uv} $ s.t. $ w \in T^{uv}_u $ if the 2 mistakes in $ \{u, v, w\} $ are both incident on $ u $. $ \forall w \in T^{uv}_1 $, the probability that the algorithm makes a mistake on $ \{u, v\} $ given that $ w $ is the pivot is $ \Pr[D^t_{uv} | A^t_w] = 1 $. Note that $ T^{uv}_1 $, $ T^{uv}_2 $, $ \{u, v\} $, $ S^{uv} $, and $ R^{uv} $ partition $ V_t $.

	We compute $ E[M^t_{uv}|A^t_w] $ (or an upper bound thereof) when $ w $ is in each of the sets $ \{u, v\} $, $ T^{uv}_1 $, $ T^{uv}_{2u} $, $ T^{uv}_{2v} $, $ S^{uv}_u $, $ S^{uv}_v $, and $ R^{uv} $, which partition $ V_t $. Similarly, we analyze $ \Pr[D^t_{uv}|A^t_w] $, breaking up the calculation based on whether the pivot $ w $ is in $ T^{uv}_1 $, $ T^{uv}_{2u} $, $ T^{uv}_{2v} $, $ S^{uv}_{2u} $, $ S^{uv}_{2v} $, $ \{u, v\} $, or $ R^{uv} $.

	In order to prove the claim, we show that
	\begin{align*}
		\sum_{w \in V_t} E[M^t_{uv}|A^t_w] \leq&\; \max\left(2, \frac{3}{1+2p}\right) \sum_{w \in V_t}\Pr[D^t_{uv}|A^t_w]
	\end{align*}

	Thus, we have shown that $ E[ALG^t|V_t] \leq \max\left(2, \frac{3}{1+2p}\right) E[OPT^t|V_t] $. By Lemma ~\ref{lemma:loop}, the claim follows.
\end{proofsketch}
\begin{lem}
	\label{lem:randomquerypivot_queries}
	The expected number of queries made by \textsc{RandomQueryPivot} is at most $ \max\left(4p, 1\right)C_{OPT} $.
\end{lem}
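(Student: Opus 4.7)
The plan is to reuse the martingale framework that produced Lemma~\ref{lem:randomquery_mistake}. Let $Q^t$ denote the number of oracle queries issued in iteration $t$. I will aim for the per-iteration inequality $E[Q^t \mid V_t]\leq \max(4p,1)\,E[OPT^t \mid V_t]$. Once this inequality is in place, defining $Y_s = \sum_{t=1}^{s}\bigl(\max(4p,1)\,OPT^t - Q^t\bigr)$ produces a submartingale with respect to the filtration generated by the random pivots and coin flips, with almost-surely-bounded stopping time $T \leq n$. Doob's optional stopping theorem, applied exactly as in the proof of Lemma~\ref{lemma:loop} but with $Q^t$ playing the role of $ALG^t$, yields $E\bigl[\sum_t Q^t\bigr] \leq \max(4p,1)\,E\bigl[\sum_t OPT^t\bigr] = \max(4p,1)\,C_{OPT}$, since every OPT-mistake edge is decided in exactly one iteration.

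To establish the per-iteration inequality, I would introduce charging variables $N^t_{uv}$, one per OPT-mistake edge, mirroring the $M^t_{uv}$ variables of the sketch of Lemma~\ref{lem:randomquery_mistake}. Every oracle query made while processing a $(+,+,-)$ triangle $T$ is charged in full to some OPT-mistake edge lying in $T$: if the queried edge $e$ has $c^*_e=1$ the query is charged to $e$ itself, otherwise it is charged to a fixed OPT-mistake edge of $T$ (at least one exists because every $(+,+,-)$ triangle contains an OPT mistake). Then $Q^t=\sum_{\{u,v\}:c^*_{uv}=1}N^t_{uv}$, and Lemma~\ref{lemma:queryequalprob} gives the symmetric decomposition
\[
E[Q^t\mid V_t]\;=\;\sum_{\{u,v\}:c^*_{uv}=1}\frac{1}{|V_t|}\sum_{w\in V_t}E\bigl[N^t_{uv}\mid A^t_w\bigr].
\]
Decomposing $E[OPT^t\mid V_t]$ analogously via the events $D^t_{uv}$ reduces the task to the edge-wise inequality
\[
\sum_{w\in V_t}E\bigl[N^t_{uv}\mid A^t_w\bigr]\;\leq\;\max(4p,1)\sum_{w\in V_t}\Pr\bigl[D^t_{uv}\mid A^t_w\bigr]
\]
for every OPT-mistake edge $\{u,v\}$.

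The remaining step is a case analysis on the role of $w$ relative to $\{u,v\}$, using the same partition $\{u,v\},T^{uv}_1,T^{uv}_{2u},T^{uv}_{2v},S^{uv}_u,S^{uv}_v,R^{uv}$ of $V_t$ introduced in the sketch of Lemma~\ref{lem:randomquery_mistake}. For each piece I would upper bound $E[N^t_{uv}\mid A^t_w]$ using the per-triangle query expectations ($2p$ for case-1 apex pivots, and $p$ or $2p$ for case-2 pivots depending on whether the $+$-edge incident to the pivot has $c^*=1$) and lower bound $\Pr[D^t_{uv}\mid A^t_w]$ by noting when the algorithm deterministically places $u$ or $v$ in the pivot's cluster $C^t$. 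The hardest subcase will be when $\{u,v\}$ is a $-$ edge that is the unique OPT mistake in a $(+,+,-)$ triangle $(a,u,v)$: the apex pivot $a$ can incur up to $2p$ expected charges to $\{u,v\}$ without ever querying $\{u,v\}$ itself. The saving observation is that $c^*_{au}=c^*_{av}=0$ forces the algorithm to keep both $u$ and $v$ in $C^t$, so $\Pr[D^t_{uv}\mid A^t_a]=1$, providing the probability mass needed to absorb the apex-pivot charges within the $\max(4p,1)$ budget. The two branches of the maximum emerge naturally from this balance: the $4p$ branch is tight when many such apex pivots contribute, while the $1$ branch accommodates small $p$, where the baseline $\Pr=1$ contributions from the pivots $u$ and $v$ themselves dominate.
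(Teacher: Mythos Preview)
Your proposal is correct and follows essentially the same approach as the paper: bound the expected per-iteration query count by a charging argument against OPT-mistake edges, then invoke Lemma~\ref{lemma:loop}. The paper's proof is more streamlined than your plan, however: rather than running the full case analysis over $T^{uv}_{2u},T^{uv}_{2v},S^{uv}_u,S^{uv}_v,R^{uv}$, it observes upfront that in any $(+,+,-)$ triangle where $OPT$ makes two or three mistakes, every query lands on an edge with $c^*=1$ and can be charged directly there, so only the $T^{uv}_1$ contribution needs explicit bookkeeping. This collapses the bound to the single ratio $(2+4p|T^{uv}_1|)/(2+|T^{uv}_1|)\le\max(4p,1)$, which is exactly the ``hardest subcase'' you isolated.
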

\begin{proof}
	We follow an approach similar to that taken in the proof of Lemma ~\ref{lem:randomquery_mistake}. We will bound the number of queries made by the algorithm in each iteration $ t $ by charging queries to edges on which $ OPT $ makes a mistake and on which the algorithm makes a mistake in iteration $ t $. Let $ U^t $ be the number of queries made by the algorithm in iteration $ t $. We charge queries as follows to an edge $ \{u, v\} $ on which $ OPT $ makes a mistake:
	\begin{enumerate}
		\item When $ u $ or $ v $ is the pivot, the algorithm makes at most $ 1 $ query on $ \{u, v\} $ itself.
		\item When $ u $ or $ v $ is the pivot (suppose WLOG $ u $ is the pivot), $ \forall w \in T^{uv}_1 $ (defined in the proof of Lemma ~\ref{lem:randomquery_mistake}), the algorithm makes a query on $ \{u, w\} $ with probability $ p $ if $ \{u, w\} $ is a $ + $ edge.
		\item When the pivot $ w $ is in $ T^{uv}_1 $, then with probability $ p $ at most $ 2 $ queries are made when the algorithm considers the triangle $ \{u, v, w\} $.
		\item Note that we need not worry about charging mistakes in $ (+, +, -) $ triangles in which $ OPT $ makes $ 2 $ mistakes because when considering such a triangle the algorithm is guaranteed not to query the $ - $ edge on which $ OPT $ does not make a mistake. We also need not worry about charging mistakes in $ (+, +, -) $ triangles in which $ OPT $ makes $ 3 $ mistakes because each edge can be charged for any query made on that edge.
	\end{enumerate}
\begin{align*}
	E[U^t|V_t]
	\leq&\; \sum_{\{u, v\} \subseteq E \cap (V_t \times V_t)} \frac{c^*_{uv}}{|V_t|} \left[2(1+p|T^{uv}_1|) + \sum_{w \in T^{uv}_1} 2p\right]
	\leq \sum_{\{u, v\} \subseteq E \cap (V_t \times V_t)} \frac{c^*_{uv}}{|V_t|} \left[2+4p|T^{uv}_1|\right]
	\end{align*}

	Recall from the proof of Lemma ~\ref{lem:randomquery_mistake} that
	\begin{align*}
	E[OPT^t|V_t] =&\; \sum_{\{u, v\} \subseteq E \cap (V_t \times V_t)} \frac{c^*_{uv}}{|V_t|} \sum_{w \in V_t} \Pr[D^t_{uv}|A^t_w]
	\geq \sum_{\{u, v\} \subseteq E \cap (V_t \times V_t)} \frac{c^*_{uv}}{|V_t|} \left(2+|T^{uv}_1|\right)
	\end{align*}
	Here the second inequality follows from computing  $\sum_{w \in V_t} \Pr[D^t_{uv}|A^t_w]$ (see Case $1$ and $7$).
	Clearly, $ \frac{2+4p|T^{uv}_1|}{2+|T^{uv}_1|} \leq \max\left(4p, 1\right) $, so $ E[U^t|V_t] \leq \max\left(4p, 1\right) E[OPT^t|V_t] $. Then by Lemma ~\ref{lemma:loop}, the claim follows.
\end{proof}
Theorem ~\ref{thm:randomquerypivot} follows directly from Lemmas ~\ref{lem:randomquery_mistake} and ~\ref{lem:randomquerypivot_queries}.

\section{Lower bound on Query Complexity}
\label{sec:lower}
The query complexities of the algorithms presented in this paper are linear in $ C_{OPT} $, but it is not clear whether this number of queries is necessary for finding an (approximately) optimal solution. In this section, we show that a query complexity linear in $ C_{OPT} $ is necessary for approximation factors below a certain threshold assuming that the Gap-ETH, stated below, is true.
\begin{hyp}
	\label{hyp:gapeth}
	(Gap-ETH) There is some absolute constant $ \gamma > 0 $ s.t. any algorithm that can distinguish between the following two cases for any given 3-SAT instance with $ n $ variables and $ m $ clauses must take time at least $ 2^{\Omega(m)} $. (see e.g. \cite{Dinur-16})
	\begin{enumerate}[i]
		\item The instance is satisfiable.
		\item Fewer than $ (1-\gamma)m $ of the clauses are satisfiable.
	\end{enumerate}
\end{hyp}

The proof of the following lemma is provided in the appendix.
\begin{lem}
    \label{thm:lowerbound}
	Let $ C_{OPT} $ be the optimum number of mistakes for a given instance of correlation clustering. 
	Assuming Hypothesis 1, there is no $ \left(1+\frac{\gamma}{10}\right) $-approximation algorithm for correlation clustering on $ N $ vertices that runs in time $ 2^{o(C_{OPT})}\mbox{poly}(N) $ where $ \gamma $ is as defined in Hypothesis 1.
\end{lem}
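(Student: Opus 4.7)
The plan is to give a gap-preserving reduction from Gap-3SAT (as posited by Hypothesis~\ref{hyp:gapeth}) to correlation clustering. Given a 3-SAT instance $\phi$ with $n$ variables and $m$ clauses, I would construct in polynomial time a correlation clustering instance $G_\phi=(V,E)$ on $N = \mathrm{poly}(n,m)$ vertices together with a threshold $T = \Theta(m)$ such that: (i) if $\phi$ is satisfiable, then $C_{OPT}(G_\phi) \le T$; and (ii) if fewer than $(1-\gamma)m$ clauses of $\phi$ can be simultaneously satisfied, then $C_{OPT}(G_\phi) > (1+\gamma/10)\,T$. Given such a reduction, a $(1+\gamma/10)$-approximation running in time $2^{o(C_{OPT})}\mathrm{poly}(N)$ could be invoked on $G_\phi$ and its returned value compared to $(1+\gamma/10)T$ to distinguish the two cases: on satisfiable inputs the returned value is at most $(1+\gamma/10)T$, while on the other side it is at least $C_{OPT}(G_\phi) > (1+\gamma/10)T$. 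The total running time would be $2^{o(T)}\mathrm{poly}(N) = 2^{o(m)}\mathrm{poly}(n,m)$, contradicting Gap-ETH.

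The core technical step is the gadget construction. For each variable $x_i$, I would place a small block of vertices encoding the truth value of $x_i$, with internal $-$ edges forcing the two polarities into distinct clusters, so that any reasonable clustering induces a natural truth assignment. For each clause $C_j = (\ell_{j,1}\vee\ell_{j,2}\vee\ell_{j,3})$, I would attach a small clause gadget, glued to the three literal blocks via $+$ edges, designed so that the minimum disagreement cost within the gadget is $A$ when the induced assignment satisfies $C_j$ and $A+B$ when it does not, for appropriately chosen constants $A,B$. Replacing large-weight edges by vertex duplication keeps $G_\phi$ unweighted and $N=\mathrm{poly}(n,m)$. Setting $T := Am + (\text{unavoidable gadget-internal cost}) = \Theta(m)$, assignments that satisfy all clauses certify $C_{OPT}(G_\phi)\le T$, while any clustering in the NO case must correspond (up to a small ``rounding'' loss absorbable into $B$) to an assignment satisfying at most $(1-\gamma)m$ clauses, and therefore pays the extra $B$ on at least $\gamma m$ clauses, giving $C_{OPT}(G_\phi) \ge T + B\gamma m$.

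The final piece is a parameter calculation: choose $A$ and $B$ (and the gadget sizes) so that $B\gamma m / T \ge \gamma/10$, i.e.\ so the multiplicative gap is at least $1+\gamma/10$ and simultaneously $T=\Theta(m)$. Any slackness in the correspondence between clusterings and assignments must be absorbed into constants so it does not eat up the $\gamma/10$ gap. With that calibration, the argument of the first paragraph yields the claimed lower bound.

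The hard part is the gadget design and its analysis: one must guarantee both that the global optimum on $G_\phi$ \emph{forces} a clustering that essentially encodes a truth assignment (so that suboptimal ``hybrid'' clusterings are dominated) and that the extra cost $B$ per unsatisfied clause is large enough relative to $A$ to blow up the $\gamma$ gap from Gap-3SAT into a $\gamma/10$ multiplicative gap, all while keeping $C_{OPT}=\Theta(m)$ so that the $2^{o(C_{OPT})}$ bound collapses to $2^{o(m)}$. Balancing these three constraints simultaneously is the delicate part, and is precisely what determines the constant $1/10$ appearing in the statement.
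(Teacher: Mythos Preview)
Your high-level strategy---a gap-preserving reduction from Gap-3SAT so that satisfiable instances yield $C_{OPT}\le T$ and hard instances yield $C_{OPT}>(1+\gamma/10)T$ with $T=\Theta(m)$---is exactly the paper's approach, and the wrap-up argument (run the hypothetical approximation, compare to $(1+\gamma/10)T$, contradict Gap-ETH) is correct. You are also right that one needs $C_{OPT}=O(m)$ globally, not just $C_{OPT}\le T$ in the YES case, so that $2^{o(C_{OPT})}=2^{o(m)}$; this follows because the total number of $+$ edges is $\Theta(m)$, but you should say so explicitly rather than writing $2^{o(T)}$.

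The genuine gap is that you have not actually built the reduction---and, as you acknowledge, that is the entire content of the lemma. The paper (following Komusiewicz--Uhlmann) does \emph{not} use ``blocks with internal $-$ edges'' or vertex duplication. Instead, for each variable $x$ it builds a \emph{cycle} of $4c(x)$ vertices whose $+$ edges are exactly the cycle edges; the two ways to break the cycle into matched pairs (even edges vs.\ odd edges) encode the two truth values. Each clause vertex is attached by two $+$ edges to a designated cycle edge for each of its literals. The crucial structural fact is that any two non-$+$-adjacent vertices share at most one $+$-neighbor, which (via a separate lemma) forces an optimal clustering to consist of clusters of $+$-adjacent vertices only; this is what rules out the ``hybrid'' clusterings you worry about. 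With this in hand, one computes $T=10m$ exactly: $6m$ unavoidable cycle deletions plus $4m$ unavoidable clause-gadget deletions in the YES case, and at least one extra deletion per unsatisfied clause in the NO case, giving the $1+\gamma/10$ gap.

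So your outline is sound, but the gadget you sketch (literal blocks separated by $-$ edges, weights simulated by duplication) is a different construction for which you have not established the key rigidity property---that optimal clusterings cannot cheat by mixing polarities or straddling gadgets. Without that, the NO-case lower bound $C_{OPT}\ge T+B\gamma m$ does not follow. To complete the proof you either need to carry out the cycle construction above, or supply and analyze your own gadget with a comparable rigidity lemma.
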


As a corollary to the above lemma, we obtain the following.
\begin{thm}
\label{thm:lower1}
	There is no polynomial-time $ \left(1+\frac{\gamma}{10}\right) $-approximation algorithm for correlation clustering that uses $ o(C_{OPT}) $ queries.
\end{thm}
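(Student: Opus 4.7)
The plan is to derive the corollary from Lemma~\ref{thm:lowerbound} by a straightforward query-elimination (simulation) argument. Suppose, for contradiction, that there is a polynomial-time algorithm $A$ that issues at most $q = o(C_{OPT})$ same-cluster queries and produces a $(1+\gamma/10)$-approximate clustering. I would construct a new algorithm $A'$ that uses no oracle queries at all and whose running time is $2^{o(C_{OPT})} \cdot \mathrm{poly}(N)$, contradicting Lemma~\ref{thm:lowerbound} under Gap-ETH.

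The heart of $A'$ is brute-force branching over the oracle's binary responses. I simulate $A$ on the input instance; whenever $A$ issues a same-cluster query, $A'$ forks the computation, once with answer ``yes'' and once with ``no,'' and recursively continues each branch. Because $A$ issues at most $q$ queries along any execution, the resulting computation tree has at most $2^q$ leaves, each corresponding to a complete transcript of hypothetical oracle responses together with the candidate clustering $A$ would output under them. $A'$ collects all of these candidate clusterings, evaluates the number of $(+,-)$ disagreements each has with the input graph (polynomial time per clustering), and returns the one with the fewest disagreements.

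Correctness: among the enumerated transcripts, at least one coincides with the transcript the \emph{actual} same-cluster oracle for $OPT$ would produce; on that branch, by hypothesis, $A$ outputs a clustering of cost at most $(1+\gamma/10) C_{OPT}$. Since $A'$ returns the best candidate measured against the input graph, its output is at least as good. Running time: each internal node of the tree performs only the polynomial work of advancing $A$ one step, and each leaf performs polynomial-time scoring; with at most $2^{q} = 2^{o(C_{OPT})}$ leaves, the total running time is $2^{o(C_{OPT})} \cdot \mathrm{poly}(N)$. This directly contradicts Lemma~\ref{thm:lowerbound}, yielding the theorem.

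The only mild subtlety I anticipate is handling potential randomization or adaptivity in $A$. Adaptivity is already handled by the tree structure, since branching occurs only when $A$ actually queries and every root-to-leaf path contains at most $q$ such branch points. For randomization, one can either fix $A$'s random tape (the approximation bound is assumed to hold with, say, constant probability, so standard amplification gives a deterministic hard instance) or incorporate the random tape as an additional $\mathrm{poly}(N)$-length input to the simulation; in either case the $2^{o(C_{OPT})}\mathrm{poly}(N)$ bound survives. I expect this bookkeeping to be the only nontrivial step; the rest is simply the standard observation that a $q$-query algorithm can be converted into an unqueried algorithm at a multiplicative cost of $2^q$.
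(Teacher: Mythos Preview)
Your proposal is correct and follows essentially the same approach as the paper: branch on every query to eliminate the oracle, obtaining at most $2^{o(C_{OPT})}$ candidate clusterings, and return the best one, contradicting Lemma~\ref{thm:lowerbound}. Your treatment is in fact more careful than the paper's (you explicitly address adaptivity and randomization), but the core argument is identical.
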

\begin{proof}
Suppose there exists an algorithm that approximates correlation clustering with an approximation factor of $ \left(1+\frac{\gamma}{10}\right) $ and uses at most $ o(C_{OPT}) $ queries. We follow the algorithm but instead when the algorithm issues a query, we branch to two parallel solutions instances with the two possible query answers from the oracle. Since the number of queries is $o(C_{OPT})$, the number of branches/solutions that we obtain by this process is at most $2^{o(C_{OPT})}$. We return the one which gives the minimum number of mistakes. This gives a contradiction to Lemma~\ref{thm:lowerbound}.
\end{proof}

\section{Experiments}
\label{sec:experiment}
In this section, we report detailed experimental results on multiple synthetic and real-world datasets. We compare the performance of the existing correlation clustering algorithms that do not issue any queries, alongside with our new algorithms. We compare three existing algorithms:  the deterministic constant factor approximation algorithm of Bansal et al. \cite{Bansal-02} (BBC), the combinatorial 3-approximation algorithm of  Ailon et al. \cite{Ailon-05} (ACN), and the state-of-the-art $2.06$-approximation algorithm of Chawla et al. based on linear program (LP) rounding \cite{Chawla-15} (LP-Rounding). The code and data used in our experiments can be found at \texttt{\href{https://github.com/sanjayss34/corr-clust-query-esa2019}{https://github.com/sanjayss34/corr-clust-query-esa2019}}.

\paragraph*{Datasets.} Our datasets range from small synthetic datasets to large real datasets and real crowd answers obtained using Amazon Mechanical Turk. Below we give a short description of them. 

\noindent {\bf Synthetics Datasets: Small.} We generate graphs with $\approx 100$ nodes by varying the cluster size distribution as follows.  [N] represents $ 10$ cliques whose sizes are drawn i.i.d. from a $ Normal(8, 2) $ distribution. This generates clusters of nearly equal size.
[S] represents $5$ clusters of size $5$ each, $4$ clusters of size $15$ each and one cluster of size $30$. This generates clusters with moderate skew. [D] represents $ 3 $ cliques whose total size is $ 100 $ and whose individual sizes are determined by a draw from a $ Dirichlet((3, 1, 1)) $ distribution. This generates clusters with extreme skewed distribution with one cluster accounting for more than $80\%$ of edges.

\noindent{\bf Synthetics Datasets: Large.}
We generate two datasets {\em skew} and {\em sqrtn} each containing $900$ nodes of fictitious hospital patients data, including name, phone number, birth date and address using the data set generator of the Febrl system \cite{christen2008febrl}. {\em skew} contains few $(\approx \log{n})$ clusters of large size $(\approx \frac{n}{\log{n}})$, moderate number of clusters $(\approx \sqrt{n})$ of moderate size $(\approx \sqrt{n})$ and a large tail of small clusters. {\em sqrtn} contains $\sqrt{n}$ clusters of size $\sqrt{n}$.

\noindent{ \bf Noise Models for Synthetic Datasets.}
Initially, all intra-cluster edges are labelled with $+$ sign and all inter-cluster edges are labelled with $-$ sign. Next, the signs of a subset of edges are flipped according to  the following distributions. Denote by $ C_1, C_2, ..., C_k $ the clusters that we generate. Let $N$ denote the number of vertices in a graph. Let $ \ell_1 = 0.01 $, $ \ell_2 = 0.1 $, and $ L $ be an integer. For the small datasets, we set $ L = 100 $, and for the large {\em skew} and {\em sqrtn} datasets, we set $ L = \lfloor \ell_2\binom{N}{2} \rfloor $.
\begin{itemize}
    \item [I.] Flip sign of $ L $ edges uniformly at random.
    \item [II.] Flip sign of $ \min\{\lfloor L/k\rfloor, |C_i|-1\} $ edges uniformly at random within each clique $ C_i $. Do not flip sign of the inter-cluster edges.
    \item [III.] Flip sign of edges as in II in addition to selecting uniformly at random $ \lceil \ell_1 |C_i||C_j| \rceil $ edges between each pair of cliques $ C_i $, $ C_j $ and flipping their sign.
\end{itemize}

\noindent{\bf Real-World Datasets.} We use several real-world datasets.

$\bullet$ In the {\em cora} dataset \cite{cora}, each node is a scientific paper represented by a string determined by its title, authors, venue, and date; edge weights between nodes are computed using Jaro string similarity \cite{firmani2018robust, winkler2006overview}. The {\em cora} dataset consists of $1.9K$ nodes, $191$ clusters with the largest cluster-size being $236$.

$\bullet$  In the {\em gym} dataset \cite{verroios2015entity}, each node corresponds to an image of a gymnast, and each edge weight reflects the similarity of the two images (i.e. whether the two images correspond to the same person). The {\em gym} dataset consists of $94$ nodes with $12$ clusters and maximum cluster size is $15$.
    
    $\bullet$ In the {\em landmarks} dataset \cite{gruenheid2015fault}, each node corresponds to an image of a landmark in Paris or Barcelona, and the edge weights reflect the similarity of the two images. The {\em landmarks} dataset consists of $266$ nodes, $13$ clusters and the maximum size of clusters is $43$.
    
    $\bullet$ In the {\em allsports} dataset \cite{verroios2017waldo}, the nodes correspond to images of athletes in one of several sports, and the edge weights reflect the similarity of the two images.  The pairs of images across sports are easy to distinguish but the images within the same category of sport are quite difficult to distinguish due to various angles of the body, face and uniform. The {\em allsports} dataset consists of $200$ nodes with $64$ clusters and with a maximum size of cluster being just $5$.

Since the underlying graphs are weighted, we convert the edge weights to $\pm 1$ labels by simply labeling an edge $ + $ if its weight is at least $ 1/2 $ and $ - $ otherwise (the edge weights in all of the weighted graphs are in $ [0, 1] $). We also perform experiments directly on the weighted graphs \cite{Chawla-15} to show how the above rounding affects the results.

\noindent{\bf Oracle.} For small datasets, we use the Gurobi (\href{www.gurobi.org}{www.gurobi.org}) optimizer to solve the integer linear program (ILP) for correlation clustering \cite{Chawla-15} to obtain the optimum solution, which is then used as an oracle. For larger datasets like {\em skew}, {\em sqrtn} and {\em cora}, ILP takes prohibitively long time to run. For these large datasets,the ground-truth clustering is available and is used as the oracle. 

For practical implementation of oracles, one can use the available crowd-sourcing platforms such as the Amazon Mechanical Turk. It is possible that such an oracle may not always give correct answer. We also use such crowd-sourced oracle for experiments on real datasets. Each question is asked $3$ to $5$ times to Amazon Mechanical Turk, and a majority vote is taken to resolve any conflict among the answers. We emphasize that the same-cluster query setting can be useful in practice because two different sources of information can produce the edge signs and the oracle -- for instance, the edge signs can be produced by a cheap, automated computational method (e.g. classifiers), while the oracle answers can be provided by humans through the crowd-sourcing mechanism explained above.
\begin{table}[pt]
    \centering
    \begin{tabular}{|c|c|c|c|c|c|c|c|c|c|c|}
    \hline
      Mode  & \shortstack{ILP \\ Oracle} & BBC & ACN & \shortstack{LP \\ Rounding} & Bocker & \shortstack{Bocker \\ Queries} & QP & \shortstack{QP \\ Queries} & RQP & \shortstack{RQP \\ Queries}  \\
      \hline
      N+I & 100 & 271 & 205.67 & 100 & 100 & 40 & 100 & 113 & 104.33 & 63.3 \\
      \hline
      N+II & 48 & 104 & 70.0 & 48 & 48 & 30 & 48 & 47 & 56.33 & 27.33 \\
      \hline
      N+III & 93 & 201 & 130 & 123 & 93 & 49 & 93 & 91 & 97.67 & 66.0 \\
      \hline
      D+I & 100 & 100 & 267.0 & 100 & 100 & 46 & 100 & 86 & 100 & 83.33 \\
      \hline
      D+II & 48 & 48 & 144.33 & 48 & 48 & 34 & 48 & 57 & 48 & 40.67 \\
      \hline
      D+III & 64 & 64 & 216.33 & 64 & 64 & 43 & 64 & 71 & 64 & 75.0 \\
      \hline
      S+I & 100 & 969 & 206.0 & 100 & 100 & 47 & 100 & 136 & 100.67 & 92.0 \\
      \hline
      S+II & 60 & 831 & 100.67 & 61.67 & 60 & 46 & 60 & 71 & 63.67 & 58.67 \\
      \hline
      S+III & 137 & 913 & 297 & 141.33 & 137 & 71 & 137 & 159 & 139.33 & 107.33 \\
      \hline
    \end{tabular}
    \caption{Results for Experiments on synthetic small datasets. BBC denotes the algorithm of \cite{Bansal-02}, ACN denotes the 3-approximation algorithm of \cite{Ailon-05}, LP Rounding denotes the algorithm of \cite{Chawla-15}, QP denotes \textsc{QueryPivot}, and RQP denotes \textsc{RandomQueryPivot}(0.25). All numerical columns except those marked as ``Queries'' give the number of mistakes made by the algorithm.}
    \label{tab:basic_experiments}
    \vspace{-1em}
\end{table}
\vspace{-1em}
\paragraph*{Results.} We compare the results of our  \textsc{QueryPivot} and \textsc{RandomQueryPivot} algorithm as well as the prior algorithms BBC \cite{Bansal-02}, ACN \cite{Ailon-05}, LP-Rounding \cite{Chawla-15}, and one of Bocker's edge branching algorithms \cite{Bocker2009}. For the algorithms that are randomized (ACN, LP-Rounding and \textsc{RandomQueryPivot}), we report the average of three runs. The algorithm of Bansal et al. \cite{Bansal-02} requires setting a parameter $\delta$. We tried several values of $ \delta $ on several of the datasets and chose the value that seemed to give the best performance overall.

\noindent{\bf Synthetic Datasets.}
Table \ref{tab:basic_experiments} summarizes the results of different algorithms on small synthetic datasets. 
\begin{table}[pt]
    \centering
   \small \begin{tabular}{|c|c|c|c|c|c|c|c|c|c|}
    \hline
    Dataset/Mode & \shortstack{LP \\ Rounding} & BBC & ACN & Bocker & \shortstack{Bocker \\ Queries} & QP & QP Queries & RQP & RQP Queries  \\
      \hline
      Skew (I) & & 8175 & 31197.67 & 48 & 1054 & 0 & 17108 & 71.33 & 10051.33 \\
      \hline
      Skew (II) & & 700 & 1182.33 & 416 & 377 & 60 & 668 & 282 & 558.0 \\
      \hline
      Skew (III) & & 8175 & 12260.67 & 379 & 1370 & 56 & 8977 & 293.0 & 4475.67 \\
      \hline
      Sqrtn (I) & & 13050 & 36251.33 & 0 & 862 & 0 & 13171 & 9.67 & 7851.0 \\
      \hline
      Sqrtn (II) & & 0 & 1484.67 & 0 & 494 & 0 & 748 & 0.0 & 711.0 \\
      \hline
      Sqrtn (III) & & 13050 & 12711.33 & 0 & 841 & 0 & 6449 & 0.0 & 2693.33 \\
      \hline
    \end{tabular}
    \caption{Results for Large Synthetic Datasets where Mistakes are measured with respect to ground-truth clustering and the oracle is the ground-truth clustering.}
    \label{tab:large_datasets-1}
    \vspace{-1em}
\end{table}

As we observe, our \textsc{QueryPivot} algorithm always obtains the optimum clustering. Moreover, \textsc{RandomQueryPivot} has a performance very close to \textsc{QueryPivot} but often requires much less queries. Interestingly, the LP-rounding algorithm performs very well except for $N+III$. ACN and BBC algorithms have worse performance than LP-Rounding, and in most cases ACN is preferred over BBC. The Bocker algorithm obtains the optimal clustering as well and, with the exception of one case, uses fewer queries than \textsc{RandomQueryPivot}.
 
For the larger synthetic datasets {\em skew} and {\em sqrt}, as discussed the ground-truth clustering is used as an oracle. We also use the ground-truth clustering to count the number of mistakes. On these datasets, the LP-rounding algorithm caused an out-of-memory error on a machine with 256 GB main memory that we used. The linear programming formulation for correlation clustering has $O(n^3)$ triangle inequality constraints; this results in very high time and space complexity rendering the LP-rounding impractical for correlation clustering on large datasets. Table \ref{tab:large_datasets-1} summarizes the results.

As we observe, \textsc{QueryPivot} algorithm recovers the exact ground-truth clustering in several cases. \textsc{RandomQueryPivot} has a low error rate as well and uses significantly fewer queries. Compared to the Bocker algorithm, \textsc{QueryPivot} and \textsc{RandomQueryPivot} generally make fewer mistakes but use more queries.

\noindent{\bf Real-World Datasets}
The results for the real-world datasets are reported in Table \ref{tab:large_datasets-2}, \ref{tab:large_datasets-3} and \ref{tab:large_datasets-4}. It is evident from Table~\ref{tab:large_datasets-2} that our algorithms outperform the existing algorithms aside from the Bocker algorithm by a big margin in recovering the original clusters. Table \ref{tab:large_datasets-2} also includes results for the LP-rounding algorithm applied to the original weighted graph for the Gym, Landmarks, and Allsports datasets. We also report in Table~\ref{tab:large_datasets-2time} the running times for the experiments in Table~\ref{tab:large_datasets-2}. These numbers show that the BBC and ACN algorithms are substantially faster than the others, while our algorithms are substantially faster than the LP-rounding algorithm. The Bocker algorithm is considerably slower than our algorithms on both the Landmarks dataset and the Cora dataset, which is the largest. We note that of the three ``data reduction'' techniques described in \cite{Bocker2009}, we implemented two -- removing cliques in intermediate ``edge branching (querying)'' steps and merging vertices according to queries. The technique that we did not implement, ``checking for unaffordable edge modifications'' assumes that the number of mistakes made by the optimal clustering is known.
\begin{table}[pt]
    \centering
   \small \begin{tabular}{|c|c|c|c|c|c|c|c|c|c|c|}
    \hline
    Dataset/Mode & \shortstack{LP \\ Rounding} & \shortstack{LP \\ Rounding \\ (weighted)} & BBC & ACN & Bocker & \shortstack{Bocker \\ Queries} & QP & \shortstack{QP \\ Queries} & RQP & \shortstack{RQP \\ Queries}  \\
      \hline
      Cora &  &  & 62891 & 26065.0 & 8164 & 2004 & 4526 & 2188 & 4664.67 & 1474.33 \\
      \hline
      Gym & 221.0 & 332.67 & 449 & 301.67 & 65 & 74 & 8 & 150 & 82.67 & 97.33 \\
      \hline
      Landmarks & 29648.0 & 25790.0 & 31507 & 28770 & 238 & 1593 & 3426 & 953 & 1124.67 & 1467.0 \\
      \hline
      Allsports & 230.0 & 226.33 & 227 & 253.33 & 223 & 13 & 217 & 41 & 223.67 & 21.0 \\
      \hline
    \end{tabular}
    \caption{Results for Real-World Datasets where mistakes are measured with respect to ground-truth clustering and the oracle is the ground-truth clustering. LP Rounding (weighted) refers to the LP rounding of \cite{Chawla-15} applied to the weighted input graph.}
    \label{tab:large_datasets-2}
    \vspace{-1em}
\end{table}
\begin{table}[pt]
    \centering
   \begin{tabular}{|c|c|c|c|c|c|c|c|}
    \hline
    Dataset/Mode & \shortstack{LP \\ Rounding} & \shortstack{LP \\ Rounding \\ (weighted)} & BBC & ACN & Bocker & QP & RQP \\
      \hline
      Cora & & & 1.58 & 0.16 & 6182.53 & 2170.33 & 515.59 \\
      \hline
      Gym & 4.96 & 5.09 & 0.004 & 0.0014 & 0.42 & 0.33 & 0.27 \\
      \hline
      Landmarks & 190.96 & 9571.32 & 0.048 & 0.00067 & 65.28 & 1.96 & 2.82 \\
      \hline
      Allsports & 41.54 & 42.08 & 0.018 & 0.025 & 0.32 & 13.28 & 12.76 \\
      \hline
    \end{tabular}
    \caption{Running times (in seconds) for the results in Table \ref{tab:large_datasets-2}. For randomized algorithms, the time shown is the average over three trials.\protect\footnotemark}
    \label{tab:large_datasets-2time}
    \vspace{-1em}
\end{table}
\footnotetext{In these experiments, we used a machine running Ubuntu 16 with 28 2.6 GHz Intel Xeon E5-2690 v4 CPU's and 256 GB of main memory.}

\begin{table}[H]
\centering
\begin{tabular}{|c|c|c|c|c|c|c|}
    \hline
    Dataset/Mode & Bocker & \shortstack{Bocker \\ Queries} & QP & QP Queries & RQP & RQP Queries  \\
      \hline
      Gym & 156 & 81 & 135 & 175 & 160.0 & 104.67 \\
      \hline
      Landmarks & 1221 & 3139 & 4645 & 1997 & 2172.33 & 1548.33 \\
      \hline
      Allsports & 223 & 13 & 218 & 41 & 223.67 & 21.0 \\
      \hline
    \end{tabular}
     \caption{Results for Real-World Datasets where mistakes are measured with respect to ground-truth clustering and the oracle is the crowd.}
    \label{tab:large_datasets-3}
   \vspace{-1em}
\end{table}
Table~\ref{tab:large_datasets-3} reports the results using a faulty crowd oracle. Contrasting the results of Table~\ref{tab:large_datasets-2} and \ref{tab:large_datasets-3}, we observe minimal performance degradation; that is, our algorithms are robust to noise. The results in this table are important, as this setting is closest to the typical real-world application of same-cluster queries. Note that the source of information that gives the signs of the edges is different from that which is the crowd oracle. For the landmarks dataset, the original edge weights are determined by a gist detector \cite{oliva2001modeling}, while the oracle used in Table ~\ref{tab:large_datasets-3} is given by high-quality crowd workers. For the gym and allsports datasets, the original edge weights are determined by (lower quality) human crowd workers, but the oracle used in Table ~\ref{tab:large_datasets-3} is based on high-quality crowd workers.
Finally, in Table~\ref{tab:large_datasets-4}, we report the results using the optimum ILP solution as the oracle. For the larger datasets, it is neither possible to run the ILP nor LP-Rounding due to their huge space and time requirements. In general, our algorithms \textsc{QueryPivot} and \textsc{RandomQueryPivot} outperform the other algorithms except for Bocker et al.'s \cite{Bocker2009} algorithm. In terms of number of mistakes and query complexity, our algorithms are comparable to Bocker et al.'s algorithm; there are cases in which the latter attains superior performance and cases in which our algorithms are better. We also note that our algorithms are in general faster than Bocker et al.'s algorithm. A more detailed analysis of the comparison among our algorithms and Bocker et al.'s algorithm is left as a topic for future work.

 \begin{table}[H]
    \centering
    \begin{tabular}{|c|c|c|c|c|c|c|c|c|c|}
    \hline
    Dataset/Mode &  \shortstack{LP \\ Rounding} & BBC & ACN & Bocker & \shortstack{Bocker \\ Queries} & QP & QP Queries & RQP & RQP Queries  \\
      \hline
      Gym & 276.0 & 464 & 338.0 & 207 & 80 & 207 & 171 & 211.0 & 112.67 \\
      \hline
      Landmarks & 4092.0 & 4995 & 5240.67 & 4092 & 254 & 4092 & 267 & 4092.0 & 265.33 \\
      \hline
      Allsports & 33.33 & 65 & 40.67 & 28 & 12 & 28 & 36 & 30.33 & 18.67 \\
      \hline
    \end{tabular}
    \caption{Results for Real-World Datasets where mistakes are measured with respect to the graph and the oracle is the optimal ILP solution for the graph.}
    \label{tab:large_datasets-4}
    \vspace{-1em}
\end{table}

\section{Acknowledgements}
The second author would like to thank Dan Roth for letting him use his machines for running experiments, Sainyam Galhotra for help with datasets, and Rajiv Gandhi for useful discussions.

%%
%% Bibliography
%%

%% Please use bibtex, 

%\bibliographystyle{unsrtnat}
\newpage
\bibliography{lipics-v2019-sample-article}

% \bibliography{example-paper}
%%%%%%%%%%%%%%%%%%%%%%%%%%%%%%%%%%%%%%%%%%%%%%%%%%%%%%%%%%%%%%%%%%%%%%%%%%%%%%%
%%%%%%%%%%%%%%%%%%%%%%%%%%%%%%%%%%%%%%%%%%%%%%%%%%%%%%%%%%%%%%%%%%%%%%%%%%%%%%%

\cleardoublepage

\section*{Appendix}
\subsection{Proof of Lemma~\ref{lem:randomquery_mistake}}

Let $ s_{uv} = 1 $ if $ \{u, v\} $ is a $ - $ edge and $ 0 $ otherwise. Let $ c^*_{uv} $ equal $ 1 $ if $ OPT $ makes a mistake on $ \{u, v\} $ and $ 0 $ otherwise. 

Let $ OPT^t $ be the number of edges $ \{u, v\} $ s.t. $ c^*_{uv} = 1 $ and the algorithm makes a decision on $ \{u, v\} $ in iteration $ t $. Let $ ALG^t $ be the number of edges $ \{u, v\} $ s.t. the algorithm makes a mistake on $ \{u, v\} $ and the algorithm makes a decision on $ \{u, v\} $ in iteration $ t $. 

Let $ V_t $ be the set of vertices remaining at the beginning of iteration $ t $. Let $ D^t_{uv} $ be the event that the algorithm makes a decision on $ \{u, v\} $ in iteration $ t $.

{\bf Lemma~\ref{lem:randomquery_mistake}.}
	{\it The expected number of mistakes made by the algorithm's clustering is at most $ \max\left(2, \frac{3}{1+2p}\right)C_{OPT} $.}
\begin{proof}
	By Lemma ~\ref{lemma:loop}, if we show that $ E[ALG^t - \alpha OPT^t] \leq 0 $ for any $ t $, where $ \alpha \leq \max\left(2, \frac{3}{1+2p}\right)$, then the claim will follow.  Let $ A^t_w $ be the event that $ w \in V_t $ is the pivot in iteration $ t $. Since a pivot is selected uniformly at random from $V_t$, we have
	\begin{align*}
		E[OPT^t | V_t] =&\; \sum_{\{u, v\} \subseteq E \cap (V_t \times V_t)} \frac{c^*_{uv}}{|V_t|} \sum_{w \in V_t} \Pr[D^t_{uv}|A^t_w]
	\end{align*}
	
	Now we will write $ E[ALG^t|V_t] $ by charging the algorithm's mistakes to each of $ OPT $'s mistakes. Let $ M^t_{uv} $ be the charge incurred to $ \{u, v\} $ in iteration $ t $. We will assign charges such that $M^t_{uv}=0$ if $c^*_{uv}=0$. Then
	\begin{align*}
		E[ALG^t|V_t] =&\; \sum_{\{u, v\} \subseteq E \cap (V_t \times V_t)} c^*_{uv} E[M^t_{uv}] \\
		=&\; \sum_{\{u, v\} \subseteq E \cap (V_t \times V_t)} c^*_{uv} \frac{1}{|V_t|} \sum_{w \in V_t} E[M^t_{uv}|A^t_w]
	\end{align*}
	
	Our goal is to compute an upper bound on $ E[M^t_{uv}|A^t_w] $ when $ c^*_{uv} = 1 $. To do so, we need to define several events.
	
	We now consider a fixed edge $ \{u, v\} $ s.t. $ c^*_{uv} = 1 $, define the following subsets of $ V_t $: 
	\begin{itemize}
	\item $ \forall i \in \{1, 2, 3\} $, $ T^{uv}_i $ is the set of vertices $ w $ s.t. $ \{u, v, w\} $ is a $ (+, +, -) $ triangle in which $ OPT $ makes exactly $ i $ mistakes, 
	\item $ S_2^{uv} $ is the set of vertices $ w $ s.t. $ \{u, v, w\} $ is a $ (+, -, -) $ or $ (+, +, +) $ triangle in which $ OPT $ makes exactly $ 2 $ mistakes, 
	\item $ Y^{uv}_u $ is $ \emptyset $ if $ \{u, v\} $ is a $ + $ edge; if $ \{u, v\} $ is a $ - $ edge, then $ Y^{uv}_u $ is the set of vertices $ w $ s.t. $ \{u, w\} $ is a $ + $ edge, $ \{v, w\} $ is a $ - $ edge, and $ OPT $ makes mistakes on $ \{u, v\} $ and $ \{u, w\} $. Since $ \{u, v, w\} $ is a $ (+, -, -) $ triangle in which $ OPT $ makes exactly two mistakes, $ Y^{uv}_u \subseteq S_2^{uv} $.
	\item $ R^{uv} \equiv V_t \setminus \{T^{uv}_1 \cup T^{uv}_2 \cup S_2^{uv} \cup \{u, v\} \}$. 
		\item Furthermore, let $ T^{uv}_{2u} $ be the subset of $ T^{uv}_2 $ s.t. $ w \in T^{uv}_{2u} $ if the $ 2 $ mistakes of $ OPT $ in $ \{u, v, w\} $ are both incident on $ u $. 
	\item Similarly, let $ S^{uv}_{2u} $ be the subset of $ S_2^{uv} $ s.t. $ w \in S^{uv}_{2u} $ if the 2 mistakes of $ OPT $ in $ \{u, v, w\} $ are both incident on $ u $.
	\end{itemize}
	 Note that $ T^{uv}_1 $, $ T^{uv}_2 $, $ \{u, v\} $, $ S_2^{uv} $, and $ R^{uv} $ partition $ V_t $.
	
	Let $ Q^t_{uv} $ be the event that $ \{u, v\} $ is queried in iteration $ t $. Note that by the argument in the proof of Lemma ~\ref{lemma:queryequalprob}, $ \Pr[Q^t_{uv} | A^t_u] = \Pr[Q^t_{uv} | A^t_v] = 1-(1-p)^{|T^{uv}_1|+|T^{uv}_2|+|T^{uv}_3|} $ if $ \{u, v\} $ is a $ + $ edge and that $ \Pr[Q^t_{uv} | A^t_u] = \Pr[Q^t_{uv} | A^t_v] = 1-(1-p)^{|T^{uv}_1|} $ if $ \{u, v\} $ is a $ - $ edge. Also, note that in any $ (+, +, -) $ triangle in which $ OPT $ makes exactly $ 2 $ mistakes, both mistakes must be on $ + $ edges.
	
	Let $ M^t_{uv} $ be the charge incurred to $ \{u, v\} $ in iteration $ t $. Then
	\begin{align*}
		E[ALG^t|V_t] =&\; \sum_{\{u, v\} \subseteq E \cap (V_t \times V_t)} c^*_{uv} E[M^t_{uv}] \\
		=&\; \sum_{\{u, v\} \subseteq E \cap (V_t \times V_t)} c^*_{uv} \frac{1}{|V_t|} \sum_{w \in V_t} E[M^t_{uv}|A^t_w]
	\end{align*}
	We now compute $ E[M^t_{uv}|A^t_w] $ (or an upper bound thereof) when $ w $ is in each of the sets $ \{u, v\} $, $ T^{uv}_1 $, $ T^{uv}_{2u} $, $ T^{uv}_{2v} $, $ S^{uv}_{2u} $, $ S^{uv}_{2v} $, and $ R^{uv} $, which partition $ V_t $.
	\begin{enumerate}
		\item $ w \in T^{uv}_1 $. Then $ \{u, v\} $ is in exactly one triangle that includes the pivot $ w $, and the algorithm will make a mistake on $ \{u, v\} $ in iteration $ t $ regardless of whether $ uw $ or $ vw $ is queried. Therefore, $ E[M^t_{uv}|A^t_w] = \Pr[D^t_{uv} | A^t_w] =1 $. 
		\item $ w \in T^{uv}_{2u} $. Then $ \{u, v\} $ is in exactly one triangle that includes the pivot $ w $, and the algorithm will make a mistake on $ \{u, v\} $ iff the algorithm does not query $ \{u, w\} $. Therefore, $ E[M^t_{uv}|A^t_w] = 1-\Pr[Q^t_{uw}|A^t_w] $.
		\item $ w \in T^{uv}_{2v} $. Analogous to case 2: $ E[M^t_{uv}|A^t_w] = 1-\Pr[Q^t_{vw}|A^t_w] $.
		\item $ w \in Y^{uv}_u $. Then recall that $ \{u, v\} $ is a $ - $ edge, $ \{u, w\} $ is a $ + $ edge, and $ \{v, w\} $ is a $ - $ edge. In this case, whether or not the algorithm queries $ \{u, w\} $, the algorithm will not make a mistake on $ \{u, v\} $, so $ E[M^t_{uv}|A^t_w] = 0 $.
		\item $ w \in Y^{uv}_v $. Analogous to case 4: $ E[M^t_{uv}|A^t_w] = 0 $.
		\item $ w \in S^{uv}_{2u} \setminus Y^{uv}_u $. Then $ \{u, v\} $ is in exactly one triangle that includes the pivot $ w $, and the algorithm will make a mistake on $ \{u, v\} $ iff the algorithm queries $ \{u, w\} $. Therefore, $ E[M^t_{uv}|A^t_w] = \Pr[Q^t_{uw}|A^t_w] $.
		\item $ w \in S^{uv}_{2v} \setminus Y^{uv}_v $. Analogous to case 6: $ E[M^t_{uv}|A^t_w] = \Pr[Q^t_{vw}|A^t_w] $.
		\item $ w \in R^{uv} $. Then $ \{u, v\} $ is in exactly one triangle that includes the pivot $ w $, and clearly $ \Pr[M^t_{uv}|A^t_w] \leq \Pr[D^t_{uv}|A^t_w] $.
		\item $ w = u $. The expected charge in this case is equal to the sum of the following parts.
		\begin{enumerate}
			\item The algorithm makes a mistake on $ \{u, v\} $ iff the algorithm queries $ \{u, v\} $. The charge for this part is thus $ \Pr[Q^t_{uv}|A^t_u] $.
			
			\item If the algorithm does not query $ \{u, v\} $, then $ \forall w \in T^{uv}_1 $, the algorithm makes a mistake on $ \{v, w\} $. The charge for this part is thus $ |T^{uv}_1|(1-\Pr[Q^t_{uv}|A^t_u]) $.
			
			\item For each $ w \in T^{uv}_{2u} $, the algorithm makes a mistake on $ \{v, w\} $ iff the algorithm queries neither $ \{u, v\} $ nor $ \{u, w\} $. In this case, we charge $ \frac{1}{2} $ to $ \{u, v\} $ and $ \frac{1}{2} $ to $ \{u, w\} $. Thus, the expected charge to $ \{u, v\} $ for this part is $ \frac{1}{2} \sum_{w \in T^{uv}_{2u}} \Pr[\overline{Q^t_{uv}} \cap \overline{Q^t_{uw}} | A^t_u] \leq \frac{1}{2} \sum_{w \in T^{uv}_{2u}} \Pr[\overline{Q^t_{uw}}|A^t_u] = \frac{1}{2} \sum_{w \in T^{uv}_{2u}} 1-\Pr[Q^t_{uw}|A^t_u] $.
			
			\item For each $ w \in S^{uv}_{2u} \setminus Y^{uv}_u $, the algorithm makes a mistake on $ \{v, w\} $ only if the algorithm queries exactly one of $ \{u, v\} $ and $ \{u, w\} $. We will charge $ \{u, v\} $ for a mistake on $ \{v, w\} $ in the case that the algorithm queries $ \{u, w\} $ and not $ \{u, v\} $ (and we will charge $ \{u, w\} $ otherwise). The expected charge for this part is then $ \sum_{w \in S^{uv}_{2u} \setminus Y^{uv}_u} \Pr[\overline{Q^t_{uv}} \cap Q^t_{uw} | A^t_u] \leq \sum_{w \in S^{uv}_{2u} \setminus Y^{uv}_u} \Pr[Q^t_{uw} | A^t_u] $.
			
			\item We now argue that for any other vertex $ w $, we need not charge anything more to $ \{u, v\} $ due to a mistake on the edge $ \{v, w\} $. If $ w \in T^{uv}_{2v} $, then $ OPT $ makes a mistake on $ \{v, w\} $, so we charge any mistake made on $ \{v, w\} $ to $ \{v, w\} $. Similarly, if $ w \in S^{uv}_{2v} $ or if $ w \in T^{uv}_3 $, then we charge any mistake made on $ \{v, w\} $ to $ \{v, w\} $. If $ w \in Y^{uv}_u $, then the algorithm will make a mistake on $ \{v, w\} $ iff it queries $ \{u, v\} $ and not $ \{u, w\} $; the charge for this mistake is assigned to $ \{u, w\} $ by case (d). If for some vertex $ w $, $ \{u, v, w\} $ is a $ (+, -, -) $ or a $ (+, +, +) $ in which $ OPT $ makes $ 0 $ mistakes, then it is easily verified that the algorithm will not make a mistake on one of the edges in iteration $ t $ given that $ u $ is the pivot. If for some vertex $ w $, $ \{u, v, w\} $ is a $ (+, -, -) $ triangle in which $ OPT $ makes exactly $ 1 $ mistake, then the algorithm will not make a mistake on $ \{v, w\} $ regardless of whether $ \{u, v\} $ is queried. (Note that it is not possible for $ OPT $ to make exactly $ 1 $ mistake in a $ (+, +, +) $ triangle. If for some vertex $ w $, $ \{u, v, w\} $ is a $ (+, +, +) $ triangle in which $ OPT $ makes $ 3 $ mistakes, then we charge any mistake made on $ \{v, w\} $ to $ \{v, w\} $, on which $ OPT $ must make a mistake. Finally, if for some vertex $ w $ $ \{u, v, w\} $ is a $ (-, -, -) $ triangle, then we make a mistake on $ \{v, w\} $ in iteration $ t $ iff we make mistakes on both $ \{u, v\} $ and $ \{u, w\} $ in iteration $ t $. Note that this event can occur only if $ OPT $ also makes a mistake on $ \{v, w\} $. Thus, in this case too we charge a mistake made on $ \{v, w\} $ to $ \{v, w\} $ and not to $ \{u, v\} $.
		\end{enumerate}
		Total (upper bound): $ \Pr[Q^t_{uv}|A^t_u] + |T^{uv}_1|(1-\Pr[Q^t_{uv}|A^t_u]) + \frac{1}{2} \sum_{w \in T^{uv}_{2u}} 1-\Pr[Q^t_{uw}|A^t_u] + \sum_{w \in S^{uv}_{2u} \setminus Y^{uv}_u} \Pr[Q^t_{uw} | A^t_u] $
		
		\item $ w = v $. Analogous to the previous case: Total (upper bound) is $ \Pr[Q^t_{uv}|A^t_v] + |T^{uv}_1|(1-\Pr[Q^t_{uv}|A^t_v]) + \frac{1}{2} \sum_{w \in T^{uv}_{2v}} 1-\Pr[Q^t_{vw}|A^t_v] + \sum_{w \in S^{uv}_{2v} \setminus Y^{uv}_v} \Pr[Q^t_{vw} | A^t_v] $
	\end{enumerate}
	
	Adding the expected charges (or upper bounds thereof) for each of these cases, we obtain:
	\begin{align*}
		\sum_{w \in V_t} E[M^t_{uv}|A^t_w] \leq&\; \Pr[Q^t_{uv}|A^t_u] + \Pr[Q^t_{uv}|A^t_v] + |T^{uv}_1| \\
		+&\; 2|T^{uv}_1|(1-\Pr[Q^t_{uv}|A^t_u]) \\
		+&\; \frac{3}{2} \sum_{w \in T^{uv}_{2u}} 1- \Pr[Q^t_{uw}|A^t_u] \\
		+&\; \frac{3}{2} \sum_{w \in T^{uv}_{2v}} 1-\Pr[Q^t_{vw}|A^t_v] \\
		+&\; 2 \sum_{w \in S^{uv}_{2u} \setminus Y^{uv}_u} \Pr[Q^t_{uw}|A^t_u] \\
		+&\; 2 \sum_{w \in S^{uv}_{2v} \setminus Y^{uv}_v} \Pr[Q^t_{vw}|A^t_v] \\
		+&\; \sum_{w \in R^{uv}} \Pr[D^t_{uv}|A^t_w]
	\end{align*}
	Note that we used Lemma ~\ref{lemma:queryequalprob} to group some terms above.
	
	Now we perform a similar analysis for $ \Pr[D^t_{uv}|A^t_w] $, breaking up the calculation based on whether the pivot $ w $ is in $ T^{uv}_1 $, $ T^{uv}_{2u} $, $ T^{uv}_{2v} $, $ S^{uv}_{2u} $, $ S^{uv}_{2v} $, $ \{u, v\} $, or $ R^{uv} $.
	\begin{enumerate}
		\item $ w \in T^{uv}_1 $. Then the algorithm is guaranteed to make a mistake on $ \{u, v\} $ in iteration $ t $, so $ \Pr[D^t_{uv}|A^t_w] = 1 $.
		
		\item $ w \in T^{uv}_{2u} $. Note that the probability that the algorithm makes a mistake on $ \{u, v\} $ in iteration $ t $ is then $ 1-\Pr[Q^t_{uw}|A^t_w] $, as noted in the case analysis for $ E[M^t_{uv}|A^t_w] $. Then since $ \Pr[D^t_{uv}|A^t_w] $ is at least the probability that the algorithm makes a mistake on $ \{u, v\} $ in iteration $ t $, $ \Pr[D^t_{uv}|A^t_w] \geq 1-\Pr[Q^t_{uw}|A^t_w] $.
		
		\item $ w \in T^{uv}_{2v} $. Analogous to case 2: $ \Pr[D^t_{uv}|A^t_w] \geq 1-\Pr[Q^t_{vw}|A^t_w] $.
		
		\item $ w \in S^{uv}_{2u} \setminus Y^{uv}_u$. Note that the probability that the algorithm makes a mistake on $ \{u, v\} $ in iteration $ t $ is equal to $ \Pr[Q^t_{uw}|A^t_w] $, as noted in the case analysis for $ E[M^t_{uv}|A^t_w] $. Then since $ \Pr[D^t_{uv}|A^t_w] $ is at least the probability that the algorithm makes a mistake on $ \{u, v\} $ in iteration $ t $, $ \Pr[D^t_{uv}|A^t_w] \geq \Pr[Q^t_{uw}|A^t_w] $.
		
		\item $ w \in S^{uv}_{2v} \setminus Y^{uv}_v $. Analogous to case 4: $ \Pr[D^t_{uv}|A^t_w] \geq \Pr[Q^t_{vw}|A^t_w] $.
		
		\item $ w \in R^{uv} $. In this case, we do not simplify $ \Pr[D^t_{uv}|A^t_w] $ any further.
		
		\item $ w \in \{u, v\} $. $ \Pr[D^t_{uv}|A^t_w] = 1 $ because the pivot is clustered in iteration $ t $.
	\end{enumerate}
	
	Hence,
	\begin{align*}
		\sum_{w \in V_t} \Pr[D^t_{uv}|A^t_w] \geq&\; \sum_{w \in T^{uv}_1} 1 && \text{Case 1} \\
		&\;+ \sum_{w \in T^{uv}_{2u}} \Pr[\overline{Q^t_{uw}}|A^t_w] && \text{Case 2} \\
		&\;+ \sum_{w \in T^{uv}_{2v}} \Pr[\overline{Q^t_{vw}}|A^t_w] && \text{Case 3} \\
		&\;+ \sum_{w \in S^{uv}_{2u} \setminus Y^{uv}_u} \Pr[Q^t_{uw}|A^t_w] && \text{Case 4} \\
		&\;+ \sum_{w \in S^{uv}_{2v} \setminus Y^{uv}_v} \Pr[Q^t_{vw}|A^t_w] && \text{Case 5} \\
		&\;+ \sum_{w \in R^{uv}} \Pr[D^t_{uv}|A^t_w] && \text{Case 6} \\
		&\;+ 2 && \text{Case 7}
	\end{align*}
	
	In order to prove the claim, it is sufficient to show that
	\begin{align*}
		\sum_{w \in V_t} E[M^t_{uv}|A^t_w] \leq&\; \max\left(2, \frac{3}{1+2p}\right) \sum_{w \in V_t} \Pr[D^t_{uv}|A^t_w]
	\end{align*}
	First, note the following obvious inequalities which follow by noting $\max\left(2, \frac{3}{1+2p}\right) \geq 2$. Let $\alpha=\max\left(2, \frac{3}{1+2p}\right)$.

	$$\;\sum_{w \in R^{uv}} \Pr[D^t_{uv}|A^t_w] \\ \;\leq \alpha \sum_{w \in R^{uv}} \Pr[D^t_{uv}|A^t_w] $$
	\begin{align*}&\;\frac{3}{2} \sum_{w \in T^{uv}_{2u}} \Pr[\overline{Q^t_{uw}}|A^t_u] + \frac{3}{2} \sum_{w \in T^{uv}_{2v}} \Pr[\overline{Q^t_{vw}}|A^t_v] \\ \;&\leq \alpha \left(\sum_{w \in T^{uv}_{2u}} \Pr[\overline{Q^t_{uw}}|A^t_w] + \sum_{w \in T^{uv}_{2v}} \Pr[\overline{Q^t_{vw}}|A^t_w]\right) 
	\end{align*}
	\begin{align*}&\;2 \sum_{w \in S^{uv}_{2u} \setminus Y^{uv}_u} \Pr[Q^t_{uw}|A^t_u] + 2 \sum_{w \in S^{uv}_{2v} \setminus Y^{uv}_v} \Pr[Q^t_{vw}|A^t_v] \\ &\;\leq \alpha \left(\sum_{w \in S^{uv}_{2u} \setminus Y^{uv}_u} \Pr[Q^t_{uw}|A^t_w] + \sum_{w \in S^{uv}_{2v} \setminus Y^{uv}_v} \Pr[Q^t_{vw}|A^t_w]\right)
	\end{align*}

	Therefore, considering the remaining terms of $\sum_{w \in V_t} E[M^t_{uv}|A^t_w]$ and $\sum_{w \in V_t} \Pr[D^t_{uv}|A^t_w]$, to establish $\sum_{w \in V_t} E[M^t_{uv}|A^t_w] \leq\; \max\left(2, \frac{3}{1+2p}\right) \sum_{w \in V_t} \Pr[D^t_{uv}|A^t_w]$, it remains to be shown that
	
	\begin{align*}
		&\;\Pr[Q^t_{uv}|A^t_u] + \Pr[Q^t_{uv}|A^t_v] \\
		&+ |T^{uv}_1| + 2|T^{uv}_1|(1-\Pr[Q^t_{uv}|A^t_u]) \\ &\;\leq \alpha(|T^{uv}_1| + 2)
	\end{align*}

	Since $ 1-(1-p)^{|T^{uv}_1|} \leq \Pr[Q^t_{uv}|A^t_u] = \Pr[Q^t_{uv}|A^t_v] \leq 1 $, it suffices to show that

	\begin{align*}
		2+|T^{uv}_1|+2|T^{uv}_1|(1-p)^{|T^{uv}_1|} \leq&\; \max\left(2, \frac{3}{1+2p}\right)(|T^{uv}_1|+2)
	\end{align*}

	First, we show that when $ |T^{uv}_1| \geq 3 $, $ [1+2(1-p)^{|T^{uv}_1|}]|T^{uv}_1| \leq \frac{3}{1+2p} |T^{uv}_1| $. Equivalently, we want to show that when $ x \geq 3 $, $ [1+2(1-p)^{x}] \leq \frac{3}{1+2p} $. Suppose for contradiction that $ 1+2(1-p)^x > \frac{3}{1+2p} $.
	\begin{align*}
		1+2(1-p)^{x} >&\; \frac{3}{1+2p} \\
		2(1-p)^{x} >&\; \frac{2-2p}{1+2p} \\
		(1-p)^{x} >&\; \frac{1-p}{1+2p} \\
		(1-p)^{x-1}(1+2p) >&\; 1 \\
	\end{align*}
	Since $ x \geq 3 $, we have that
	\begin{align*}
		(1-p)^2(1+2p) >&\; 1 \\
		(1-2p+p^2)(1+2p) >&\; 1 \\
		1-4p^2 + p^2 + 2p^3 >&\; 1 \\
		2p-3 >&\; 0 \\
		p >&\; \frac{3}{2}
	\end{align*}
	We have reached a contradiction. Finally, we show that if $ x \leq 2 $, then $ [1+2(1-p)^{x}]x + 2 \leq \frac{3}{1+2p}(x+2) $. If $ x = 0 $, then the claim is clearly true. When $ x = 1 $,
	\begin{align*}
		1+2(1-p)+2 - \frac{3}{1+2p} \cdot 3 =&\; 5-2p - \frac{9}{1+2p} \\
		=&\; \frac{(5-2p)(1+2p)-9}{1+2p} \\
		=&\; \frac{-4p^2+8p-4}{1+2p} \\
		=&\; \frac{-4(p-1)^2}{1+2p} \leq 0
	\end{align*}
	When $ x = 2 $,
	
	\begin{align*}
		&[1+2(1-p)^2] \cdot 2 + 2 - \frac{3}{1+2p} \cdot 4 \\
		=&\; 4 + 4(1-p)^2 - \frac{12}{1+2p} \\
		=&\; \frac{4+8p+4(1-p)^2(1+2p)-12}{1+2p} \\
		=&\; \frac{4+8p+4(1-4p^2+p^2+2p^3)-12}{1+2p} \\
		=&\; \frac{8p-12p^2+8p^3-4}{1+2p} \\
		=&\; \frac{4(2p-3p^2+2p^3)-4}{1+2p}
	\end{align*}

    The derivative of $ f(p) = \frac{4(2p-3p^2+2p^3)-4}{1+2p} $ is strictly positive on $ p \in [0, 1] $, so the function is maximized at $ f(1) = 0 $.

	Thus, we have shown that $ E[ALG^t|V_t] \leq \max\left(2, \frac{3}{1+2p}\right) E[OPT^t|V_t] $. By Lemma ~\ref{lemma:loop}, the claim follows.
\end{proof}

\subsection{Proof of Lemma~\ref{thm:lowerbound}}
We now prove Lemma~\ref{thm:lowerbound}. Our proof approach follows that of \cite{Komusiewicz-11}, which proves a similar theorem based on the ETH. We first state a lemma from \cite{Komusiewicz-11}, and for completeness, we provide their proof of the lemma. For each vertex $ u $, we denote by $ N^+(u) $ the set of vertices with which $ u $ has $ + $-edges and by $ N^-(u) $ the set of vertices with which $ u $ has $ - $-edges.
\begin{lem}
	\label{lem:lowerbound}
	(Lemma 2.1 in \cite{Komusiewicz-11}) Let $ G = (V, E) $ be the input graph for a correlation clustering instance. There exists an optimal solution to the correlation clustering instance defined by $ G $ s.t. for any two vertices $ u $ and $ v $ for which $ \{u, v\} $ is a $ - $ edge and $ |N^+(u) \cap N^+(v)| \leq 1 $, $ u $ and $ v $ are in different clusters.
\end{lem}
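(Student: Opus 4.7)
The plan is to establish the statement by an exchange argument: starting from any optimum clustering, I will show that whenever a ``bad'' pair $(u,v)$ (i.e.\ a $-$ edge with $|N^+(u)\cap N^+(v)|\le 1$ whose endpoints lie in a common cluster $C$) exists, one can move either $u$ or $v$ to a singleton cluster without increasing the cost, thereby fixing this pair. Iterating until no bad pair remains yields the desired optimum.

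\textbf{Step 1: Cost change of moving a vertex to a singleton.} Let $C$ be the cluster containing both $u$ and $v$, write $n=|C|$, and use the notation $N^+_C(\cdot)$, $N^-_C(\cdot)$ from the paper. As in the calculation used in Lemma~\ref{lem:suboptimal1}, extracting $v$ from $C$ changes the number of disagreements by exactly $|N^+_C(v)|-|N^-_C(v)|$ (we lose $|N^-_C(v)|$ intra-cluster negative mistakes and gain $|N^+_C(v)|$ inter-cluster positive mistakes), and analogously for $u$. If either of these quantities is $\le 0$, then moving the corresponding vertex to a singleton gives a clustering of no greater cost in which $u$ and $v$ are separated, and we are done.

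\textbf{Step 2: Deriving a contradiction in the remaining case.} Suppose for contradiction that both $|N^+_C(v)|>|N^-_C(v)|$ and $|N^+_C(u)|>|N^-_C(u)|$. Since $|N^+_C(w)|+|N^-_C(w)|=n-1$ for any $w\in C$, this forces $|N^+_C(u)|\ge n/2$ and $|N^+_C(v)|\ge n/2$, hence
\[
|N^+_C(u)|+|N^+_C(v)|\ge n.
\]
On the other hand, because $\{u,v\}$ is a $-$ edge, neither $u$ nor $v$ lies in $N^+_C(u)\cup N^+_C(v)$, so $|N^+_C(u)\cup N^+_C(v)|\le n-2$. Combined with the hypothesis $|N^+_C(u)\cap N^+_C(v)|\le |N^+(u)\cap N^+(v)|\le 1$, inclusion--exclusion yields
\[
|N^+_C(u)|+|N^+_C(v)|=|N^+_C(u)\cup N^+_C(v)|+|N^+_C(u)\cap N^+_C(v)|\le (n-2)+1=n-1,
\]
contradicting the previous display. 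Hence at least one of $u,v$ can be moved to a singleton without increasing cost.

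\textbf{Step 3: Iteration and termination.} Starting from an arbitrary optimum, repeatedly apply Step~2 to any remaining bad pair. Each such move strictly decreases the number of bad pairs, because the extracted vertex is now in a singleton cluster and therefore participates in no intra-cluster negative edges; in particular the chosen pair $(u,v)$ is no longer bad, and no new bad pair involving the extracted vertex is created. Since the number of bad pairs is finite and the cost never increases, the procedure terminates at an optimum clustering in which no bad pair survives, proving the lemma. The main subtlety — and what makes the pigeonhole in Step~2 tight — is the interplay between the $-$ label on $\{u,v\}$ (which removes $u,v$ from the positive neighborhoods) and the bound $|N^+(u)\cap N^+(v)|\le 1$; everything else is routine bookkeeping.
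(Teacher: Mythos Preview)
Your proof is correct and follows essentially the same exchange argument as the paper: both fix a bad pair by extracting one of $u,v$ to a singleton and show this does not raise the cost, then iterate. The only cosmetic difference is in how the key inequality is established---the paper picks the vertex with fewer exclusive $+$-neighbors in $C$ (via a WLOG) and computes the cost change directly, whereas you argue by contradiction that both extractions cannot strictly increase cost using inclusion--exclusion; your termination argument (bad-pair count decreases) is also a minor variant of the paper's (at most $n$ singletons can be created).
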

\begin{proof}
	Consider any optimal solution $ S $ to the correlation clustering instance defined by $ G $. We will construct an optimal solution $ S' $ from $ S $ that satisfies the desired property. Consider an arbitrary but particular pair of vertices $ u $ and $ v $ s.t. $ u $ and $ v $ are in the same cluster $ C $ in $ S $ but $ \{u, v\} $ is not a $ + $ edge and $ |N^+(u) \cap N^+(v)| \leq 1 $. Suppose without loss of generality that $ |N^+(v) \cap C \setminus N^+(u)| \geq |N^+(u) \cap C \setminus N^+(v)| $. Note that the collection $ N^+(v) \cap C \setminus N^+(u) $, $ N^+(u) \cap C \setminus N^+(v) $, $ \{u\} $, $ \{v\} $, $ N^+(u) \cap N^+(v) \cap C $ partitions $ C $. If we were to remove $ u $ from $ C $ and create a singleton cluster containing $ u $ in $ S' $, we would require one more mistake on a $ + $ edge for each vertex in $ N^+(u) \cap C \setminus N^+(v) $ and for each vertex in $ N^+(u) \cap N^+(v) \cap C $ and we would require one less mistake on a $ - $ edge for each vertex in $ N^+(v) \cap C \setminus N^+(u) $ and for $ v $. Since $ |N^+(v) \cap C \setminus N^+(u)| \geq |N^+(u) \cap C \setminus N^+(v)| $ and since $ |\{v\}| = 1 \geq |N^+(v) \cap N^+(u)| $, we have that the extra cost incurred to $ S' $ over $ S $ due to this modification is $ |N^+(u) \cap C \setminus N^+(v)|+|N^+(u) \cap N^+(v)| - |N^+(u) \cap C \setminus N^+(v)| - |\{v\}| \leq 0 $.
	
	Repeating this process iteratively until all pairs satisfy the desired property, we will obtain an optimal solution $ S' $ because for each iteration a distinct singleton cluster is formed and at most $ n $ such singleton clusters can be formed.
\end{proof}
\begin{proof}[Proof of Lemma \ref{thm:lowerbound}]
	The following proof is based on the proof of Theorem 2.1 in \cite{Komusiewicz-11}.
	
	Consider the gap version of 3-SAT. In this problem, we are given $ n' $ boolean variables and $ m $ CNF clauses with at most $ 3 $ literals each, and we are asked to decide whether all clauses are simultaneously satisfiable or whether fewer than $ (1-\gamma) $ of the clauses are simultaneously satisfiable, where $ \gamma $ is as defined in Hypothesis ~\ref{hyp:gapeth}. We want to reduce this problem to the correlation clustering problem.
	
	Before describing the construction of the input graph $ G $ constructed from the input instance of gap 3-SAT, we ensure that any given literal appears in a given clause at most once (if it appears more than once, we remove all but one occurrence). After this preprocessing, let $ X $ be the set of all variables, and let $ Y $ be the set of all clauses. Let $ n = |X| $, and note that $ m = |Y| $. Note that the maximum proportion of the clauses that can be simultaneously satisfied is unchanged by this preprocessing.
	
	We now describe the construction of the input graph $ G = (V, E) $ for correlation clustering. Any edge that is not explicitly labeled as $ + $ according to our description below is labeled as $ - $. For each $ x \in X $, let $ c(x) $ be the number of clauses that include $ x $. For each $ x \in X $, create $ 4c(x) $ vertices in $ V $ and label the edges between each pair among the $ 4c(x) $ vertices so that the $ + $ edges form a cycle. For each $ x \in X $, we number the vertices in $ x $'s cycle of $ + $ edges from $ 1 $ to $ 4c(x) $ in a manner consistent with the order of the vertices in the cycle. For each $ x \in X $, let $ v_{x, i} $ denote the $ i $th vertex in $ x $'s cycle. Next, for each $ x \in X $, we number the clauses that include $ x $ from $ 1 $ to $ c(x) $ according to an arbitrary permutation. Note that if the literals $ x $ and $ \overline{x} $ both occur in some single clause $ y $, these two literals contribute separately to $ c(x) $ (so the clause $ y $ will receive two separate indices in the aforementioned permutation). For each $ x \in X $ and each $ y \in Y $ s.t. $ y $ includes $ x $, let $ \pi_x(y) $ be the index assigned to $ y $ in $ x $'s permutation of clauses. For all $ + $ edges in variable cycles, we say that the edge is even if it is of the form $ \{v_{x, 2j}, v_{x, 2j+1}\} $ for some integer $ j $ and odd otherwise. If $ 12m-4\sum_{x \in X} c(x) > 0 $, then we create a ``dummy'' variable cycle containing $ 12m-4\sum_{x \in X} c(x) > 0 $ vertices, and we add a dummy variable $ v $ to $ X $ with $ c(v) = 3m-\sum_{x \in X \setminus \{v\}} c(x) $. Thus, we now have that $ \sum_{x \in X} c(x) = 3m $.
	
	For each $ y \in Y $, we create a vertex $ v_y $ in $ V $. For each variable $ x $ that appears in $ y $, if the literal $ x $ appears in $ y $, then we label as $ + $ the edges $ \{v_{x, 4\pi_x(y)-3}, v_y\} $ and $ \{v_{x, 4\pi_x(y)-2}, v_y\} $. For each variable $ x $ that appears in $ y $, if the literal $ \overline{x} $ appears in $ y $, then we label as $ + $ the edges $ \{v_{x, 4\pi_x(y)-2}, v_y\} $ and $ \{v_{x, 4\pi_x(y)-1}, v_y\} $. Also, if the clause $ y $ contains both $ x $ and $ \overline{x} $ for some $ x \in X $, then we create a new vertex $ v_a $ and label as $ + $ the edge $ \{v_y, v_a\} $. If the clause $ y $ contains only one literal, then we create four new vertices $ v_{a_1} $, $ v_{a_2} $, $ v_{b_1} $, and $ v_{b_2} $, and we label as $ + $ the edges $ \{v_{a_1}, v_y\} $, $ \{v_{a_2}, v_y\} $, $ \{v_{b_1}, v_y\} $, and $ \{v_{b_2}, v_y\} $. Finally, if the clause $ y $ contains exactly two literals, then we create two new vertices $ v_{a_1} $ and $ v_{a_2} $, and we label as $ + $ the edges $ \{v_{a_1}, v_y\} $ and $ \{v_{a_2}, v_y\} $. An example of a ``clause gadget'' is shown in Figure 1.
	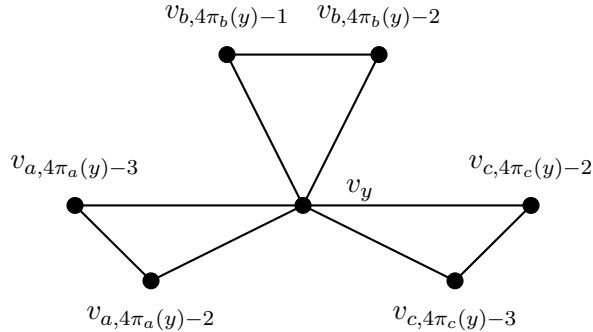
\begin{figure}[ht!]
		\begin{center}
			\begin{tikzpicture}
			%% vertices
			\draw[fill=black] (-0.25,0) circle (3pt);
			\draw[fill=black] (-1.25, 1) circle (3pt);
			\draw[fill=black] (3.75,0) circle (3pt);
			\draw[fill=black] (4.75, 1) circle (3pt);
			\draw[fill=black] (1.75,1) circle (3pt);
			\draw[fill=black] (0.75,3) circle (3pt);
			\draw[fill=black] (2.75, 3) circle (3pt);
			%% vertex labels
			\node at (-0.25,-0.5) {$ v_{a, 4\pi_a(y)-2} $};
			\node at (-1.25, 1.5) {$ v_{a, 4\pi_a(y)-3} $};
			\node at (3.75,-0.5) {$ v_{c, 4\pi_c(y)-3} $};
			\node at (4.75, 1.5) {$ v_{c, 4\pi_c(y)-2} $};
			\node at (2.5,1.2) {$ v_y $};
			\node at (0.75, 3.5) {$ v_{b, 4\pi_b(y)-1} $};
			\node at (2.75, 3.5) {$ v_{b, 4\pi_b(y)-2} $};
			%%% edges
			\draw[thick] (-1.25, 1) -- (-0.25,0) -- (1.75,1) -- (-1.25, 1);
			\draw[thick] (0.75, 3) -- (2.75, 3) -- (1.75, 1) -- (0.75, 3);
			\draw[thick] (3.75, 0) -- (4.75, 1) -- (1.75, 1) -- (3.75, 0);
			\end{tikzpicture}
			\caption{Clause gadget for clause $ y = a \vee \overline{b} \vee c $}
		\end{center}
	\end{figure}
	
	First, we will show that $ 10m $ is a lower bound on the optimal correlation clustering cost. We say that vertices $ u $ and $ v $ are $ + $-neighbors if $ \{u, v\} $ is labeled as $ + $, and for each vertex $ u $ we say that the set of all $ + $-neighbors of $ u $ is the $ + $-neighborhood of $ u $. Since for any two vertices that are not $ + $-neighbors the intersection of their $ + $-neighborhoods is of size at most $ 1 $, it follows from Lemma ~\ref{lem:lowerbound} that there exists an optimal solution to correlation clustering on $ G $ that consists only of mistakes on $ + $-edges. Note that any optimal clustering must make a mistake on every other $ + $-edge. In each variable cycle, there are $ 4c(x) $ edges and since $ \sum_{x \in X} c(x) = 3m $, $ 4 \cdot 3m \frac{1}{2} = 6m $ mistakes on $ + $ edges in the variable cycles must be made. Next, consider each clause gadget. It is easy to verify that at least $ 4 $ mistakes must be made on the $ + $ edges incident on the clause vertex. Hence, $ 6m + 4m = 10m $ is a lower bound on the optimal cost.
	
	Now we show that when the 3-SAT instance is satisfiable, there exists an optimal solution to the correlation clustering problem on $ G $ that has a cost of $ 10m $. Fix a satisfying assignment $ \mathcal{A} $ for the 3-SAT instance. For each variable $ x $, if $ x $ is true in $ \mathcal{A} $, then we choose to make mistakes on the even $ + $ edges in $ x $'s variable cycle, and otherwise we choose to make mistakes on the odd $ + $ edges in $ x $'s variable cycle. Now consider an arbitrary but particular clause vertex $ v_y $. By construction of $ G $, there is at least one $ + $ edge $ \{u, w\} $ from a variable cycle that participates in $ v_y $'s clause gadget and on which we do not make a mistake in the previous step. We make mistakes on all $ 4 $ $ + $ edges incident on $ v_y $ other than $ \{v_y, u\} $ and $ \{v_y, w\} $. The subgraph consisting of only $ + $ edges then consists only of 3-cycles or paths of length $ 2 $, so we have a valid clustering. Moreover, we have made exactly $ 6m + 4m = 10m $ edge modifications.
	
	It remains to be shown that when fewer than $ (1-\gamma)m $ clauses can be satisfied in the input 3-SAT instance, any valid solution to the correlation clustering problem requires more than $ \left(1+\frac{\gamma}{10}\right)10m $ mistakes. As argued above, all mistakes made by the optimal solution are on $ + $ edges. In particular, for each variable $ x_i \in X $, any optimal clustering of $ G $ will either make mistakes on all even $ + $ edges or all odd $ + $edges in $ x_i $'s cycle in $ G $. Next, fix a clause $ y \in Y $, and let $ v_y \in V $ be the vertex corresponding to $ y $. Let $ \ell_1 $, $ \ell_2 $, and $ \ell_3 $ be the literals in $ y $, and let $ e_1 $, $ e_2 $, and $ e_3 $ be the $ + $ edges in the variable cycles to which $ v_y $ is connected; $ e_1 $, $ e_2 $, and $ e_3 $ correspond to the literals $ \ell_1 $, $ \ell_2 $, and $ \ell_3 $. If an optimal solution makes mistakes on $ e_1 $, $ e_2 $, and $ e_3 $, then it must make mistakes on all $ 6 $ $ + $ edges incident on $ v_y $. On the other hand, if the optimal clustering does not make a mistake on at least $ 1 $ of $ e_1 $, $ e_2 $, and $ e_3 $, then the clustering must make a mistake on $ 4 $ $ + $ edges incident on $ v_y $ (so that the two remaining $ + $ edges form a $ (+, +, +) $ triangle with one of $ e_1 $, $ e_2 $, or $ e_3 $). Note that if the clause $ y $ contains fewer than $ 3 $ literals or if $ y $ contains $ x $ and $ \overline{x} $ for some $ x \in X $, there are still $ 6 $ edges incident on $ v_y $, but we may not need to make a mistake on all $ + $ edges incident on $ v_y $ when we make mistakes on all of the variable cycle $ + $ edges in $ v_y $'s gadget because $ v_y $ could be in a cluster with one of the extra vertices added to the gadget (i.e. one that is not in a variable cycle). However, we must make a mistake on at least $ 5 $ edges incident on $ v_y $ to eliminate all $ (+, +, -) $ triangles. Moreover, $ 4 $ mistakes on $ + $ edges are still required when the solution does not make a mistake on one of the variable cycle $ + $ edges in $ v_y $'s gadget.
	
	We will now show that the optimal number of mistakes in $ G $ is greater than $ 10m\left(1+\frac{\gamma}{10}\right) $. Suppose for contradiction that this claim were false. Fix an optimal clustering $ \mathcal{C} $. Let $ \alpha $ be the proportion of clauses $ C $ such that in the gadget corresponding to $ C $, $ \mathcal{C} $ makes a mistake on at least $ 5 $ $ + $ edges. Then by the argument in the previous paragraph, the number of mistakes made by $ \mathcal{C} $ is at least $ 6m + 4(1-\alpha)m + 5\alpha m = 10m + \alpha m $. Since $ 10m + \alpha m \leq 10m\left(1+\frac{\gamma}{10}\right) $, $ \alpha \leq \gamma $. Now consider the assignment $ \mathcal{A} $ for the 3-SAT instance in which a variable $ x $ is set to True iff the odd edges in $ x $'s cycle in $ G $ are deleted by $ \mathcal{C} $. Note that $ \mathcal{A} $ satisfies every clause $ C $ whose corresponding vertex $ v_C \in V $ is clustered in a triangle by $ \mathcal{C} $. Since there are $ (1-\alpha)m \geq (1-\gamma)m $ such clauses, $ \mathcal{A} $ satisfies at least $ (1-\gamma)m $ of the clauses. We have reached a contradiction.
	
	Note that the number of mistakes of an optimal clustering in $ G $ is a linear function of $ m $. Let $ N = |V| \leq \sum_{x \in X} 4c(x) + m + 4m = 12m + 5m = 17m $ (the variable cycles contribute $ \sum_{x \in X} 4c(x) $ vertices, the clause vertices contribute $ m $ vertices, and the extra vertices that may be added in clause gadgets add at most $ 4m $ vertices). Assume for contradiction that there were a $ \left(1+\frac{\gamma}{10}\right) $-approximation algorithm $ \mathcal{A} $ for correlation clustering on $ N $ vertices that runs in time $ 2^{o(C_{OPT})}\mbox{poly}(N) $. Consider an arbitrary but particular gap 3-SAT instance with $ n $ variables and $ m $ clauses. Running the reduction above on this instance would require time polynomial in $ n $ and $ m $. Furthermore, we note that in the resulting correlation clustering instance, $ C_{OPT} \leq |E^+| = 12m+6m = 18m $ because one valid clustering (in which all vertices are singletons) can be obtained by making mistakes on all $ + $ edges in the graph (there are $ 12m $ edges in the variable cycles and $ 6m $ edges in the clause gadgets. So in the correlation clustering instance, $ C_{OPT} \leq 18m $ and $ N \leq 17m $, so we would have an algorithm for gap 3-SAT that runs in time $ 2^{o(m)}\mbox{poly}(m) \in 2^{o(m)} $. Such an algorithm would contradict Hypothesis ~\ref{hyp:gapeth}.
\end{proof}
\end{document}